\newcommand\problem[4]{
\begin{center}
    \noindent
    \framebox{
    \begin{minipage}{0.94\textwidth}
        {\sc #1}\par
        {\bf Input:} #2\par
        {\bf Parameter:} #3\par
        {\bf Output:} #4
    \end{minipage}}
\end{center}
}
\newcommand{\FPT}{{\sc FPT}\xspace}
\newcommand{\bigO}{\mathcal{O}}
\title{On the parameterized complexity of symmetric directed multicut}
\author{Eduard Eiben}{Department of Computer Science, Royal Holloway, University of London, Egham, UK}{Eduard.Eiben@rhul.ac.uk}{https://orcid.org/0000-0003-2628-3435}{}
\author{Cl\'ement Rambaud}{DIENS, École Normale Supérieure, CNRS, PSL University, Paris, France}{clement.rambaud@ens.psl.eu}{}{}
\author{Magnus Wahlstr\"om}{Department of Computer Science, Royal Holloway, University of London, Egham, UK}{Magnus.Wahlstrom@rhul.ac.uk}{https://orcid.org/0000-0002-0933-4504}{}
\authorrunning{E. Eiben, C. Rambaud, and M. Wahlstr\"om}
\keywords{Parameterized complexity, directed graphs, graph separation problems}
\begin{document}

\maketitle

\begin{abstract}
    We study the problem \textsc{Symmetric Directed Multicut} from a parameterized complexity perspective. 
    In this problem, the input is a digraph $D$, a set of \emph{cut requests} $C=\{(s_1,t_1),\ldots,(s_\ell,t_\ell)\}$
    and an integer $k$, and the task is to find a set $X \subseteq V(D)$ of size at most $k$ such that for every $1 \leq i \leq \ell$,
    $X$ intersects either all $(s_i,t_i)$-paths or all $(t_i,s_i)$-paths. Equivalently, every strongly connected component
    of $D-X$ contains at most one vertex out of $s_i$ and $t_i$ for every $i$. This problem is previously
    known from research in approximation algorithms, where it is known to have an $O(\log k \log \log k)$-approximation.
    We note that the problem, parameterized by $k$, directly generalizes multiple interesting FPT problems
    such as \textsc{(Undirected) Vertex Multicut} and \textsc{Directed Subset Feedback Vertex Set}. 
    We are not able to settle the existence of an FPT algorithm parameterized purely by $k$,
    but we give three partial results: An FPT algorithm parameterized by $k+\ell$; an FPT-time
    2-approximation parameterized by $k$; and an FPT algorithm parameterized by $k$ for the special
    case that the cut requests form a clique, \textsc{Symmetric Directed Multiway Cut}. 
    The existence of an FPT algorithm parameterized purely by $k$ remains an intriguing open possibility.
\end{abstract}

\section{Introduction}

Graph separation problems have been studied in parameterized complexity 
for a long time, and with significant success. In particular for undirected
graphs, a wide range of powerful FPT algorithms have been constructed,
from the early results on \textsc{Odd Cycle Transversal} by Reed et al.~\cite{ReedSV04}
and \textsc{Multiway Cut} by Marx~\cite{Marx2006},
to quite generic problems such as \textsc{Vertex Multicut}~\cite{BousquetDT18SICOMP,marx2013fixedparameter}.
In the latter problem, the input is an undirected graph $G$, 
a set of \emph{cut requests} $C=\{(s_1,t_1), \ldots, (s_\ell,t_\ell)\}$, and an integer $k$,
and the goal is to find, if it exists, a set of at most
$k$ vertices whose removal disconnects $s_i$ from $t_i$, for every $1 \leq i \leq \ell$.
Marx showed an FPT algorithm for this problem parameterized by $k+\ell$~\cite{Marx2006},
but the question of an FPT algorithm parameterized by $k$ alone remained open
for a long time, until finally settled simultaneously by Bousquet et al.~\cite{BousquetDT18SICOMP}
and Marx and Razgon~\cite{MarxR14SICOMP}.

For directed graphs, by comparison, the success is more limited, and
the line between FPT and W[1]-hard cut problems is much less clear. On
the one hand, some high profile FPT algorithms do exist for directed
graph problems. One of the earliest was \textsc{Directed Feedback
  Vertex Set}, where the goal is to find a set of at most $k$ vertices
in a directed graph which intersects all directed cycles. This problem
was shown to be FPT in 2007 by Chen et al.~\cite{ChenLLOR08} by
reduction to an auxiliary directed graph separation problem later dubbed
\textsc{Skew Multicut}. Later FPT results, following the FPT
algorithms for \textsc{Multicut} on undirected graphs, include the problems
\textsc{Directed Multiway Cut}~\cite{ChitnisHM13dmc} and \textsc{Directed Subset Feedback
  Vertex Set}~\cite{ChitnisCHM15dsfvs}.
However, other problems which are FPT on undirected graphs are intractable on digraphs.
\textsc{Directed Odd Cycle Transversal} was shown to be W[1]-hard by Lokshtanov et al.~\cite{LokshtanovR0Z20doct},
although it admits an FPT 2-approximation. 
For another example, \textsc{Directed Multicut} is the natural generalization
of \textsc{Multicut} to digraphs. Here, the input is a digraph $D$, a set of 
cut requests $C=\{(s_1,t_1), \ldots, (s_\ell,t_\ell)\}$ and an integer $k$,
and the goal is to find, if it exists, a set of at most $k$ vertices
whose removal cuts every path from $s_i$ to $t_i$, for every $1 \leq i \leq \ell$.
This problem is W[1]-hard parameterized by $k$ alone~\cite{marx2013fixedparameter}, 
even on directed acyclic graphs (DAGs)~\cite{KratschPPW15mcdags}
or for just four cut requests~\cite{pilipczuk2018directed}.


With this background, it may be considered highly unlikely to find
a natural cut problem on digraphs that directly generalizes \textsc{Vertex Multicut}
and which is FPT parameterized by the solution size alone.
Yet, we consider a problem for which this appears intriguingly plausible.

For a first attempt at a modified problem definition, consider the variant where for every
cut request $(s_i,t_i)$ we require both directions 
$(s_i,t_i)$ and $(t_i,s_i)$ to be cut. However, 
this problem remains W[1]-hard; indeed, it is equivalent to the original
problem if the input graph is a DAG.
Furthermore, it captures
{\sc Directed Vertex Multicut} on general digraphs:
if $I=(D,T,k)$ is a {\sc Directed Vertex Multicut} instance,
construct $D'$ by adding a new vertex $s'_i$ and an arc $s'_i s_i$
for every $(s_i,t_i) \in T$. Then, there is no $(t_i,s'_i)$-path
in $D'$, and cutting every $(s'_i,t_i)$-paths and $(t_i,s'_i)$-paths
is equivalent to cut every $(s_i,t_i)$-path.
This shows that this first symmetric version of 
{\sc Directed Vertex Multicut} is $W[1]$-hard too, even for $\ell=4$.

However, another directed generalization
of {\sc Vertex Multicut} has still unknown parameterized complexity.

\problem
{Symmetric Directed Vertex Multicut}
{a digraph $D$, a set of pairs of vertices 
$C=\{(s_1, t_1), \dots, (s_\ell,t_\ell)\}$, and an integer $k$.}
{$k$}
{find, if there exists, a set $X$ of at most $k$ vertices whose
removal cuts, for every $i=1, \dots, \ell$, either all $(s_i,t_i)$-paths or all $(t_i,s_i)$-paths.}

As with many directed cut problems, there are simple reductions 
between the edge- and the vertex deletion variants. We focus on the 
vertex deletion variant since it is  easier to work with (cf.~\emph{shadow removal}, discussed below).

Let us make a few observations to get a feeling for the problem.
Let $I=(D,C,k)$ be an instance of \textsc{Symmetric Directed Vertex Multicut} 
(\textsc{Symmetric Multicut} for short), and note that
a set $X \subseteq V(D)$ is a solution if and only if
$s_i$ and $t_i$ are in distinct strongly connected components in $D-X$ for
every cut request $(s_i,t_i)$. This observation is important
for understanding the structure of the problem.

We also note that \textsc{Symmetric Multicut} generalizes several
of the above-mentioned landmark FPT problems. Indeed, first
consider \textsc{Vertex Multicut}.
Let $I=(G,C,k)$ be an instance of this problem. We can then produce
an instance $I'=(D,C,k)$ of \textsc{Symmetric Multicut} simply by replacing
every edge $uv \in E(G)$ by the arcs $uv$ and $vu$.
Indeed, for every set $X \subseteq V(D)$, the strong and weak
components of $D-X$ coincide. Hence $X$ is a symmetric multicut in $D$
if and only it is a vertex multicut in $G$.

Next, let $D$ be a digraph, and let $C=\binom{V(D)}{2}$ be the set containing
all pairs of vertices over $D$. Then $I=(D,C,k)$ captures \textsc{Directed Feedback Vertex Set}.
More generally, consider \textsc{Directed Subset Feedback Vertex Set}.
In this problem, the input is a digraph $D$, a set of arcs $S \subseteq E(D)$, 
and an integer $k$, and the goal is to find a set of at most $k$ vertices
which intersects every cycle containing an arc of $S$. By the above observation,
$I=(D,S,k)$ can be interpreted as-is as an equivalent instance of \textsc{Symmetric Multicut}.
Thus, if \textsc{Symmetric Multicut} is indeed FPT parameterized by $k$,
it would make a significant generalization over the previous state of the art.

\subparagraph{Our results}
We are not able to settle the status of \textsc{Symmetric Multicut} parameterized by $k$, 
but we give three partial results. First, we give an FPT algorithm for the combined parameter of $k+\ell$.
Second, we show an FPT 2-approximation for \textsc{Symmetric Multicut} with parameter $k$.
Finally, we consider the problem \textsc{Symmetric Directed Multiway Cut}, 
where the cut requests are a set $C=\binom{T}{2}$ containing all pairs over a set of terminals $T$;
i.e., every strongly connected component of $D-X$ is allowed to contain at most one vertex of $T$.
We show that this restricted variant is FPT parameterized by $k$.

\subparagraph{Technical overview}
The first of these results is relatively straight-forward. We consider the 
solution structure of the problem, and show 
a simple FPT reduction to \textsc{Skew Multicut}.
Since \textsc{Skew Multicut} is FPT parameterized by $k$, this finishes the result.
This is analogous to the FPT algorithm for \textsc{Vertex Multicut} parameterized
by $k+\ell$ via reduction to \textsc{Multiway Cut}, noted by Marx~\cite{Marx2006}.

The FPT 2-approximation is more interesting.  First, by iterative compression we can
assume that we have a solution $Y$, say $|Y| \leq 2k+1$, and want to
determine the existence of a solution $X$ with $|X| < |Y|$ (or otherwise
prove that there is no solution of cardinality at most $k$). 
By branching on the intersection $X \cap Y$ we can assume that no
vertex of $Y$ is to be deleted. Furthermore, recall from above
that a solution $X$ to an instance $I=(D,C,k)$ is characterized by 
the strongly connected component structure of $D-X$. Hence, we may also guess 
a partition of $Y$ into strongly connected components and a topological
order on these components. After all these steps, we have an
instance $I=(D',C,k')$ and a set $Y=\{y_1,\ldots,y_r\} \subseteq V(D)$,
such that $Y$ is a symmetric multicut for $(D,C)$ and with the assumption that
we are looking for a symmetric multicut $X$ such that $X \cap Y=\emptyset$
and in $D'-X$, $y_i$ reaches $y_j$ only if $i \leq j$.
Thus, there are two remaining tasks to coordinate. $X$
cuts all paths from $y_j$ to $y_i$ for $i<j$, and simultaneously, for every terminal $y_i$
and cut pair $(s_j,t_j)$, $X$ cuts at least one of $s_j$ and $t_j$ from the 
strongly connected component of $y_i$. We achieve a 2-approximation by treating 
these steps separately. 
The first property can be ensured by a reduction to \textsc{Skew Multicut};
we note that \textsc{Skew Multicut} is still FPT (using the 
algorithm of Chen et al.~\cite{ChenLLOR08}) even if the underlying
graph is not a DAG. The key observation is now that after deleting
such a skew multicut for $Y$, the remaining task separates into
$|Y|$ disjoint instances, one for each terminal $y \in Y$.
Hence, it remains to solve the problem for an instance where there is
a central vertex $y$ such that for every cut request $(s_i,t_i)$,
every closed walk on $s_i$ and $t_i$ passes through $y$. 
Solving this problem in FPT time finally yields and FPT-time 2-approximation
for \textsc{Symmetric Multicut}. 

The FPT algorithm for \textsc{Symmetric Directed Multiway Cut} is more technical. 
It works by adapting the algorithm for \textsc{Directed Subset Feedback Vertex Set} 
of Chitnis et al.~\cite{ChitnisCHM15dsfvs}, but there are some technical complications.
First, as a more robust formulation we consider the following setting. 
The input is a digraph $D$, a list $A_1, \ldots, A_\ell$ of sets of arcs of $D$, and an integer $k$,
with the restriction that each $A_i$ is a ``near-biclique'', $A_i=S_i \times T_i$ for some
possibly overlapping vertex sets $S_i$ and $T_i$. The task is to find a set $X \subseteq V(D)$
of at most $k$ vertices such that no closed walk in $D-X$ contains arcs from two distinct
sets $A_i$ and $A_j$. Note that this version allows us to capture both the
setting where terminals are deletable and where terminals are non-deletable,
e.g., by replacing a non-deletable terminal by $k+1$ false twins,
and for each terminal $t_i \in T$ letting $S_i$ contain the twin copies of $t_i$
and $T_i$ their out-neighbours.
More importantly, arc sets of the form $A_i=S_i \times T_i$ 
are closed under the vertex bypassing operation used in \emph{shadow removal},
which the original problem formulation is not. (See Section~\ref{sec:multiway}.)

By the same setup as the FPT 2-approximation (and as Chitnis et al.~\cite{ChitnisCHM15dsfvs}),
we reduce to the iterative compression version where we additionally
have a solution set $Y$ and an ordering $y_1<\ldots<y_r$ over $Y$,
with the assumption that $y_i$ reaches $y_j$ in $D-X$ if and only if $i<j$. 
We can now apply the shadow removal technique and consider the 
set of vertices $R$ reachable from $y_r$ in $D-X$. By shadow removal,
this set is strongly connected to $y_r$ in $D-X$. But here is the
second complication. In \textsc{Directed Subset Feedback Vertex Set}, 
$R$ cannot contain any ``terminal arc'' at all, which allows the
algorithm to proceed via an intricate branching step over graph separations
in an auxiliary graph (using the so-called \emph{anti-isolation lemma} 
and important separators branching). In our setting there can be an index $i_0$ 
such that $R$ contains arcs of $i_0$ (and $A_{i_0}$ can be unboundedly big).
However, via an extra color-coding step, we are able to modify the method
of Chitnis et al.~\cite{ChitnisCHM15dsfvs}, to allow us to guess $i_0$
and find $R$. We can then find a solution by repeating the process.
In total, we show that \textsc{Symmetric Directed Multiway Cut} has 
an algorithm in time $\bigO^*(2^{\bigO(k^3)})$.

\subparagraph{Related work}
The problem \textsc{Symmetric Multicut} was first studied by Klein et al.~\cite{KleinPRT97jalg} 
in the context of approximation algorithms. The results were improved upon
by Even et al.~\cite{EvenNRS00jacm}, who showed that \textsc{Symmetric Multicut}
admits an $O(\log k \log \log k)$-approximation, where $k$ is the size of the optimal 
solution. By contrast, the best approximation ratio we are aware of for \textsc{Directed Multicut}
is just slightly better than $O(\sqrt{n})$ (Agarwal et al.~\cite{AgarwalAC07cuts}, improving on 
previous work~\cite{CheriyanKR05dmc,Gupta03}). Chuzhoy and Khanna~\cite{ChuzhoyK09cuts} showed 
that achieving a subpolynomial approximation ratio for \textsc{Directed Multicut} is hard.

We will make use of much of the toolbox developed for FPT algorithms for 
graph separation problems. In particular, the method of \emph{iterative compression},
first used for \textsc{Odd Cycle Transversal} by Reed et al.~\cite{ReedSV04};
the notion of \emph{important separators}, which underpins Marx' results
on \textsc{Multiway Cut} and related problems~\cite{Marx2006};
and the notion of \emph{shadow removal}, developed by Marx and Razgon
for \textsc{Vertex Multicut}~\cite{marx2013fixedparameter}.
These notions are explained in Section~\ref{sec:preliminaries}.
The work that is closest to our results is the FPT algorithm
for \textsc{Directed Subset Feedback Vertex Set} of Chitnis et al.~\cite{ChitnisCHM15dsfvs}.

Kim et al.~\cite{KimKPW22stoc} recently further extended the toolbox for directed
graph separation problems by a method of \emph{flow augmentation} for directed graph cuts.
This settled several long-standing problems, among other results developing an FPT
algorithm for the notorious \textsc{$\ell$-Chain SAT} problem.
Unfortunately, this method is not directly applicable to \textsc{Symmetric Multicut}
as the cut structure in the latter problem is more complex than simple $(s,t)$-cuts.

Ramanujan and Saurabh~\cite{RamanujanS17skew} considered \textsc{Skew-Symmetric Multicuts},
a problem family of multicuts on \emph{skew-symmetric digraphs} (which is effectively a generalization of
\textsc{Almost 2-SAT}). However, except for the problem name, this bears
no relation to \textsc{Symmetric Multicut}, as studied in this paper,
or to \textsc{Skew Multicut}, the auxiliary problem in the classic FPT algorithm
for \textsc{Directed Feedback Vertex Set}~\cite{ChenLLOR08}.

\subparagraph{Structure of the paper}
After introducing some useful tools in Section~\ref{sec:preliminaries},
we show in Section~\ref{sec:fpt_k_l} that
{\sc Symmetric Directed Vertex Multicut} is \FPT when parameterized
by both $k$ and $\ell$.
Then, in Section~\ref{sec:2approx}, we give a $2$-approximation
algorithm with running time $f(k) n^{\bigO(1)}$.
Finally, in Section~\ref{sec:multiway}, we show that a particular
case, called {\sc Symmetric Directed Multiway Vertex Cut}, is
\FPT.

\section{Preliminaries\label{sec:preliminaries}}

\subsection{Important cuts}

In a digraph $D$, if $X,Y$ are disjoint sets of vertices,
an $(X,Y)$-cut $S$ is a set of \emph{vertices} in 
$V(D) \setminus (X \cup Y)$ such that there is no
$(X,Y)$-path in $D-S$.
A classical tool in the design of \FPT algorithms for problems
of cut in a graph is the notion of important cut.
An $(X,Y)$-cut is said to be important if there is no
$(X,Y)$-cut further from $X$ with smaller or equal size.

\begin{definition}
    Let $D$ be a digraph and $X,Y$ be two disjoint sets of vertices.
    An $(X,Y)$-cut $S$ with set $R$ of vertices reachable from
    $X$ in $D-S$ is said to be \emph{important} if
    \begin{enumerate}
        \item $S$ is an inclusion-wise minimal $(X,Y)$-cut, and
        \item there is no $(X,Y)$-cut $S' \neq S$ of size at most $|S|$
              such that the set of vertices reachable from $X$
              in $D-S'$ is a superset of $R$.
    \end{enumerate}
    Symmetrically, $S$ is said to be \emph{anti-important}
    if it is an important $(Y,X)$-cut in $D^{op}$, the digraph obtained
    from $D$ by reversing every arc.
\end{definition}

All fundamental results on important cuts are summarised
in the following property. We refer the reader to 
\cite[Part 8.5]{cygan2015parameterized} for proofs.

\begin{proposition}
Let $D$ be a digraph, $X,Y$ be disjoint sets of vertices
and $k$ be an integer.
\begin{enumerate}
    \item One can test in polynomial time whether an $(X,Y)$-cut
          $S$ is important.
    \item\label{prop:ext_important_cut}
          If $S$ is an $(X,Y)$-cut with set $R$ of vertices reachable 
          from $X$ in $D-S$, one can compute in
          polynomial time an important $(X,Y)$-cut $S'$ such that
          $|S'| \leq |S|$ and the set of vertices reachable
          from $X$ in $D-S'$ contains $R$.
    \item If $\mathcal{S}$ is the set of important $(X,Y)$-cuts,
          then $\sum_{S \in \mathcal{S}} 4^{-|S|} \leq 1$.
    \item\label{prop:enum_important_cut}
          If $\mathcal{S}_k$ is the set of important $(X,Y)$-cuts
          of size at most $k$, then $|\mathcal{S}_k| \leq 4^k$
          and $\mathcal{S}_k$ can be enumerated in time 
          $4^k n^{\bigO(1)}$.
\end{enumerate}
\end{proposition}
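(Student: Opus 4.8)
The four items are classical facts about important cuts; the plan is to prove them essentially as in \cite[Part 8.5]{cygan2015parameterized}, organised around a single structural characterisation. First I would establish the following lemma: if $S$ is an inclusion-minimal $(X,Y)$-cut and $R$ is the set of vertices reachable from $X$ in $D-S$, then $R\cap Y=\emptyset$, $S=N^+(R)\setminus R$, $S$ separates $R$ from $Y$, and $S$ is \emph{important} if and only if $S$ is the \emph{unique} minimum $(R,Y)$-cut. The ``only if'' direction splits into two cases: if $|S|>\lambda(R,Y)$, then any minimum $(R,Y)$-cut is disjoint from $R$, is still an $(X,Y)$-cut (any $(X,Y)$-path has a prefix in $R$ and hence extends an $(R,Y)$-path), has reachable set containing $R$, and is strictly smaller, contradicting importance; and if $|S|=\lambda(R,Y)$ but minimum $(R,Y)$-cuts are not unique, then the minimum $(R,Y)$-cut with inclusion-wise largest reachable set is an $(X,Y)$-cut of size $|S|$ with strictly larger reachable set. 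The ``if'' direction is the mirror image. Since $\lambda(R,Y)$, a minimum $(R,Y)$-cut, and its uniqueness can all be read off from a single max-flow computation (after the standard vertex-splitting), this lemma immediately yields \textbf{(1)}: compute $R$, verify minimality of $S$, and accept iff $|S|=\lambda(R,Y)$ and the minimum $(R,Y)$-cut is unique.

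For \textbf{(2)}, given an $(X,Y)$-cut $S$ with reachable set $R$, observe that $S$ is in particular an $(R,Y)$-cut, so $\lambda(R,Y)\le|S|$; take $S^{*}$ to be the minimum $(R,Y)$-cut with the largest reachable set $R^{*}\supseteq R$. Then $|S^{*}|\le|S|$, $R^{*}\supseteq R$, and one checks that $S^{*}$ is the unique minimum $(R^{*},Y)$-cut (the largest reachable set over minimum $(R,Y)$-cuts is both the largest and the smallest reachable set over minimum $(R^{*},Y)$-cuts), so $S^{*}$ is important by the lemma. This is again a single max-flow call.

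For \textbf{(3)} and \textbf{(4)} I would prove the stronger statement $\sum_{S}4^{-|S|}\le 2^{-\lambda(X,Y)}$, summing over all important $(X,Y)$-cuts, by induction on the number of vertices outside $X\cup Y$. Replacing $X$ by the reachable set $R_{\max}$ of the maximum-reachable-set minimum $(X,Y)$-cut $S_{\max}$ changes neither $\lambda$ nor the family of important cuts, and makes $S_{\max}=N^+(R_{\max})\setminus R_{\max}$ the \emph{unique} minimum $(R_{\max},Y)$-cut. If $\lambda=0$ the only important cut is $\emptyset$ and the sum is $1$. Otherwise fix any $v\in S_{\max}$. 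Every important cut either (a) contains $v$ --- and $S\mapsto S\setminus\{v\}$ then injects these into the important cuts of $D-v$, in which $\lambda$ has dropped to exactly $\lambda-1$ because $v$ lay in the unique minimum cut --- or (b) has $v$ in its reachable set --- and these are exactly the important $(R_{\max}\cup\{v\},Y)$-cuts, in which $\lambda$ has risen to at least $\lambda+1$. Plugging the inductive bound into the split gives $\sum_{S}4^{-|S|}\le \tfrac14\cdot 2^{-(\lambda-1)}+2^{-(\lambda+1)}=2^{-\lambda}$, and in both branches the number of free vertices strictly decreases, so the induction terminates. Then \textbf{(3)} follows since $2^{-\lambda}\le 1$; for the first part of \textbf{(4)}, each important cut of size at most $k$ contributes at least $4^{-k}$ to a sum that is at most $1$, so there are at most $4^{k}$ of them; and the enumeration is exactly the branching above turned into an algorithm (branch on $v$, recurse on $D-v$ with budget $k-1$ in case (a) and on $(R_{\max}\cup\{v\},Y)$ with budget $k$ in case (b), prune when $\lambda>k$), whose recursion tree has $2^{2k-\lambda}\le 4^{k}$ leaves, with polynomial work per node.

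The main obstacle is concentrated in \textbf{(3)}--\textbf{(4)}, and it is twofold. First, the split above relies on every important cut having reachable set containing $R_{\max}$; I would obtain this from submodularity of the function $A\mapsto |N^+(A)\setminus A|$ applied to $R(S)$ and $R_{\max}$, together with a short argument that the relevant sets are not adjacent to $Y$. Second, one must pick the right potential: trying to prove the literal bound $\sum_{S}4^{-|S|}\le 1$ by the same branching fails, because branch (b) alone would already exhaust the budget; the sharper bound $2^{-\lambda}$ is what makes the recurrence close, and it relies precisely on the fact that deleting a vertex of the (now unique) minimum cut decreases $\lambda$ by exactly one. The rest is routine max-flow bookkeeping, the vertex-deletion (rather than edge-deletion) setting being handled throughout by vertex-splitting.
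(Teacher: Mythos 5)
Your proof is correct and follows essentially the same route as the source the paper defers to: the paper gives no proof of this proposition itself but cites \cite[Part 8.5]{cygan2015parameterized}, and your argument — the characterisation of important cuts as unique minimum $(R,Y)$-cuts, the pushing via the farthest minimum cut, and the branching on a vertex of $S_{\max}$ with measure $2k-\lambda$ and the strengthened bound $\sum_S 4^{-|S|}\le 2^{-\lambda(X,Y)}$ — is precisely the standard textbook proof. No issues.
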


\subsection{Iterative compression}
Iterative compression is a standard method
in the design of \FPT algorithms.

To avoid repetition, we give here a general property to
deal with iterative compression.
Let $\mathcal{L}$ be a parameterized algorithmic problem
such that an instance of $\mathcal{L}$ has the form
$I = (D,T,k)$ where $D$ is a digraph, $T$ depends on the problem
and $k$ is an integer.
We suppose a few properties on $\mathcal{L}$:
\begin{itemize}
    \item an instance $I=(D,T,k)$ is a yes-instance if and only if there exists
      a set $X$ of at most $k$ vertices satisfying a given property $P(D,T,X)$,
      which is supposed to be checkable in polynomial time,
    \item if $D$ is empty, then $\emptyset$ is a solution, and
    \item for every vertex $v \in V(D)$, if $X$ satisfies $P(D-v,T,X)$,
        then $X\cup\{v\}$ satisfies $P(D,T,X\cup\{v\})$.
\end{itemize}
These three properties will clearly hold for every problems 
considered in this paper.

We say that an algorithm $\mathcal{A}$ is an $\alpha$-approximation
for some $\alpha\geq 1$ if for every input instance $(D,T,k)$,
either it concludes that there is no solution of size at most $k$,
or it returns a solution of size at most $\alpha k$.
For $\alpha=1$, this is an exact algorithm.

We now define the \emph{compression} problem $\mathcal{L}'$ by: 
given $I'=(D,T,Y,k)$ where $(D, T, Y)$ satisfies $P$,
find a solution of the $\mathcal{L}$ instance $(D,T,k)$.
The parameters are now $(k,|Y|)$.
The compression problem is equivalent to the original one in the following
sense:

\begin{proposition}\label{prop:iterative_compression}
Let $\alpha \geq 1$, and $t(k,|Y|)$ be a real function
which is increasing for each parameter if the other one is fixed,
and $c \geq 0$ a constant.
If there exists an algorithm $\mathcal{A}'$ finding
an $\alpha$-approximation for $\mathcal{L}'$ in time $t(k,|Y|) n^c$
then there exists an algorithm $\mathcal{A}$
finding an $\alpha$-approximation for $\mathcal{L}$
in time $t(k, \alpha k + 1) n^{c+1}$.
In particular, if $\mathcal{L}'$ is \FPT, then $\mathcal{L}$ is \FPT
too.
\end{proposition}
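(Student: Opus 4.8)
The plan is to prove Proposition~\ref{prop:iterative_compression} by the standard iterative compression template, building a solution for $\mathcal{L}$ one vertex at a time while maintaining a solution for the current subinstance. First I would fix an arbitrary ordering $v_1, \ldots, v_n$ of $V(D)$ and let $D_i$ denote the subdigraph induced by $\{v_1, \ldots, v_i\}$, with $D_0$ empty. The invariant after processing $D_i$ is that we have either correctly reported that $(D,T,k)$ has no solution of size at most $k$, or we hold a set $X_i$ with $P(D_i,T,X_i)$ and $|X_i| \le \alpha k + 1$. The base case $i=0$ is handled by the assumption that $\emptyset$ works for the empty digraph. For the inductive step, given $X_{i-1}$ with $|X_{i-1}| \le \alpha k + 1$ and $P(D_{i-1},T,X_{i-1})$, the third assumed property gives that $Y := X_{i-1} \cup \{v_i\}$ satisfies $P(D_i,T,Y)$, so $(D_i,T,Y,k)$ is a legitimate instance of the compression problem $\mathcal{L}'$, with $|Y| \le \alpha k + 2$.

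Next I would invoke $\mathcal{A}'$ on $(D_i,T,Y,k)$. If it reports that $(D_i,T,k)$ has no solution of size at most $k$, then since any solution of $(D,T,k)$ of size at most $k$ would, intersected with $V(D_i)$, still satisfy $P(D_i,T,\cdot)$ — here one wants $P$ to be "inherited" by subinstances, which follows from iterating the third property in reverse, i.e. if $P(D,T,X)$ holds and $v \notin V(D_i)$ then removing $v$ preserves $P$ on the smaller digraph; more simply, the third property combined with monotonicity means a size-$\le k$ solution for $D$ restricted appropriately yields one for $D_i$ — we may safely report "no solution of size $\le k$" for the whole instance $\mathcal{L}$ and halt. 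Otherwise $\mathcal{A}'$ returns a set $X_i$ with $P(D_i,T,X_i)$ and $|X_i| \le \alpha k$, in particular $|X_i| \le \alpha k + 1$, restoring the invariant. After $n$ iterations we reach $D_n = D$ and output $X_n$, which satisfies $P(D,T,X_n)$ with $|X_n| \le \alpha k$, unless we have already reported infeasibility; in either case $\mathcal{A}$ is a correct $\alpha$-approximation for $\mathcal{L}$.

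For the running time, each of the $n$ iterations calls $\mathcal{A}'$ once on an instance with parameters $(k, |Y|)$ where $|Y| \le \alpha k + 2$; since $t$ is increasing in each argument, this costs at most $t(k, \alpha k + 2)\, n^c$, plus a polynomial overhead for checking $P$ and assembling $Y$. Summing over $n$ iterations gives $O(t(k, \alpha k + 2)\, n^{c+1})$. This matches the claimed bound up to the additive constant in the second argument of $t$ (stating $\alpha k + 1$ versus $\alpha k + 2$ is a cosmetic discrepancy that can be absorbed by noting one may instead maintain $|X_i| \le \alpha k$ and pass $|Y| \le \alpha k + 1$, or simply by adjusting the constant); the "in particular" clause follows since if $t$ is a computable function then so is $t(k, \alpha k + 2)$, witnessing that $\mathcal{L}$ is \FPT whenever $\mathcal{L}'$ is.

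The only subtle point — and the one I would be most careful about — is the soundness of reporting infeasibility: we need that a size-$\le k$ solution to $(D,T,k)$ yields a size-$\le k$ solution to $(D_i,T,k)$. This is where the three structural assumptions on $\mathcal{L}$ are used: the third property, applied to the vertices $V(D) \setminus V(D_i)$ one at a time, shows that if $X^\star$ is a solution for $D$ then $X^\star \setminus (V(D) \setminus V(D_i))$ — equivalently, a suitable restriction — satisfies $P$ on $D_i$, and it has size at most $|X^\star| \le k$. Everything else is bookkeeping; no new combinatorial idea is required, which is consistent with the paper's remark that this proposition is a routine packaging of iterative compression.
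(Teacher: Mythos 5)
Your proposal is correct and follows essentially the same iterative compression argument as the paper: process vertices one at a time, feed $X_{i-1}\cup\{v_i\}$ as the compression solution to $\mathcal{A}'$, and propagate a ``no'' answer. The only differences are cosmetic (you maintain $|X_i|\le \alpha k+1$ before self-correcting to $\alpha k$, recovering the stated $t(k,\alpha k+1)$ bound) and you are in fact slightly more explicit than the paper about why infeasibility of a partial instance implies infeasibility of the whole instance.
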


The proof is in the appendix.
For further information on iterative compression
we refer to \cite[Chapter 4]{cygan2015parameterized}.

\subsection{A general framework for shadow removal}

The concept of shadow was first introduced by Marx and Razgon~\cite{marx2013fixedparameter}.
The idea is to make the problem easier by assuming that
there exists a solution $X$ such that every vertex 
$v \in V(D) \setminus X$ is reachable from a given
set of vertices $T$, and can also reach $T$ in $D-X$.
Here, we give a general framework that was designed by
Chitnis et al.~\cite{ChitnisCHM15dsfvs}.

Let $D$ be a digraph and $T$ a set of vertices.
For every set of vertices $X$ disjoint from $T$, we define
the \emph{shadow} of $X$ to be the set of vertices
in $V(D) \setminus (T \cup  X)$ that either can not reach
$T$ in $D-X$, or are not reachable from $T$ in $D-X$.
Chitnis et al.~\cite{ChitnisCHM15dsfvs} provided a set 
of sufficient conditions under which we can comupte an 
over-approximation of the shadow of a solution to a problem;
in other words, we can compute a set $W$, disjoint from $T$, such that there exists a 
solution $X$, disjoint from $W$, where the shadow of $X$
is contained in $W$.

To state the result we need a few definitions from Chitnis et al.~\cite{ChitnisCHM15dsfvs}.  

\begin{definition}
Let $\mathcal{F} = \{F_1, \dots, F_q\}$ be a set of subgraphs
of $D$. We say that $\mathcal{F}$ is \emph{$T$-connected} if for every
$i=1, \dots,q$, every vertex in $F_i$ can reach $T$ by a walk
completely in $F_i$, and is reachable from $T$ by a walk completely
in $F_i$. A set of vertices $X \subseteq V(D)$ is said to be an
\emph{$\mathcal{F}$-transversal} if for every $i \in \{1, \dots, q\}$,
$F_i \cap X \neq \emptyset$.
\end{definition}

For example, if $\mathcal{F}$ is a set of walks, as is the case in our application,
then $X$ is an $\mathcal{F}$-transversal if and only if $X$ cuts every
walk in $\mathcal{F}$.
We can now give the main theorem that gives a superset of the shadow.

\begin{theorem}[\cite{ChitnisCHM15dsfvs}]
\label{thm:cover_shadow}
Let $T \subseteq V(D)$ and $k \in \mathbb{N}$.
One can construct in time $2^{\bigO(k^2)}n^{\bigO(1)}$ a family
$Z_1, \dots, Z_t$ of $t=2^{\bigO(k^2)} \log^2 n$ sets of vertices
such that for any set $\mathcal{F}$ of $T$-connected subgraphs of $D$,
if there exists an $\mathcal{F}$-transversal of size
at most $k$, then there exists an $\mathcal{F}$-transversal
$X$ and $i \in \{1, \dots, t\}$ such that:
\begin{enumerate}
    \item $|X| \leq k$,
    \item $X \cap Z_i = \emptyset$,
    \item the shadow of $X$ is included in $Z_i$.
\end{enumerate}
\end{theorem}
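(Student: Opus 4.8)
The plan is to reduce the construction to the easier task of covering only the \emph{forward} part of the shadow. Write the shadow of $X$ as $f(X)\cup r(X)$, where $f(X)$ is the set of vertices of $V(D)\setminus(T\cup X)$ not reachable from $T$ in $D-X$, and $r(X)$ the set of those that cannot reach $T$. Reversing every arc of $D$ exchanges $f(\cdot)$ and $r(\cdot)$, so it suffices to build, for an arbitrary digraph $D'$ and set $T$, a family $\mathcal{Z}^+$ of $2^{\bigO(k^2)}\log n$ vertex sets such that whenever $\mathcal{F}$ admits an $\mathcal{F}$-transversal of size at most $k$, some $Z\in\mathcal{Z}^+$ satisfies $f(X)\subseteq Z$ and $X\cap Z=\emptyset$ for a transversal $X$ with $|X|\le k$. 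Applying this in $D$ and in $D^{op}$ and taking all pairwise unions $Z^+\cup Z^-$ then yields a family of size $2^{\bigO(k^2)}\log^2 n$; making the gluing sound — so that a single transversal $X$ is witnessed in both directions — requires proving the forward statement for a \emph{canonical} transversal (below) together with a short compatibility argument, and this is one of the places where care is needed.

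The heart of the matter is a structural lemma about forward shadows, and it is here that $T$-connectedness of $\mathcal{F}$ enters. Starting from any $\mathcal{F}$-transversal $X$ with $|X|\le k$, I would ``push it towards $T$'': each $F_i\in\mathcal{F}$ meeting $X$ carries, by $T$-connectedness, a $T$-to-$T$ walk $w$ through a vertex of $X$, and replacing that vertex by the first vertex of $X$ encountered while traversing $w$ from $T$ keeps $X$ an $\mathcal{F}$-transversal; iterating gives a transversal $X'$ with $|X'|\le|X|$ that is ``tight'', meaning every $x\in X'$ is reachable from $T$ in $D-(X'\setminus\{x\})$. For a tight transversal one shows that $f(X')$ is covered by the union of the ``behind'' regions $R(S)=\{v:\ v\text{ not reachable from }T\text{ in }D-S\}$ over a subfamily of at most $k$ \emph{important cuts from $T$} (in the sense of the proposition above), each of size at most $k$. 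As there are at most $4^k$ such important cuts in total, only $\binom{4^k}{\le k}=2^{\bigO(k^2)}$ subfamilies need be considered, and all of them, together with the corresponding sets $R(S)$, are enumerable in time $2^{\bigO(k^2)}n^{\bigO(1)}$.

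It remains to convert ``the correct subfamily'' into the sets of $\mathcal{Z}^+$ while keeping the transversal disjoint from them. For the correct guess, $W:=\bigcup_S R(S)$ already contains $f(X')$; the delicate point is to also force $W\cap X'=\emptyset$. I would do this with one further random step: $X'$ can meet $W$ only inside a ``boundary'' set of $\bigO(k^2)$ vertices determined by the guessed cuts, and a uniformly random $2$-colouring of those $\bigO(k^2)$ vertices — deciding, per vertex, whether to keep it in $W$ or to remove it together with everything it isolates from $T$ — is correct with probability $2^{-\bigO(k^2)}$. Derandomising the colouring is routine: an $(n,\bigO(k^2))$-universal set has size $2^{\bigO(k^2)}\log n$, which together with the $2^{\bigO(k^2)}$ subfamily guesses gives $|\mathcal{Z}^+|=2^{\bigO(k^2)}\log n$, and the forward/reverse product yields the claimed $2^{\bigO(k^2)}\log^2 n$ sets, all constructible in time $2^{\bigO(k^2)}n^{\bigO(1)}$.

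The main obstacle is the structural lemma: proving that, after pushing towards $T$, the forward shadow of a transversal is dominated by boundedly many important cuts from $T$, \emph{and} that the transversal can simultaneously be kept disjoint from the resulting union of behind-regions (together with arranging that the forward and reverse constructions witness the same $X$). The reversal reduction, the enumeration of important cuts, and the derandomisation by universal sets are all standard once that lemma is in place.
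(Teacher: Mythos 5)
First, note that the paper does not prove this statement at all: Theorem~\ref{thm:cover_shadow} is imported verbatim from Chitnis et al.~\cite{ChitnisCHM15dsfvs} as a black box, so the only meaningful comparison is against the known proof in that reference (the ``random sampling of important separators'' technique). Your high-level architecture does match it: split the shadow into a forward and a reverse part, handle one direction and get the other by reversing arcs, relate the shadow of a suitably canonical transversal to important separators towards $T$, and derandomise with splitters/universal sets to pick up the $\log n$ factors.

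There is, however, a genuine gap at the central counting step. You claim that the relevant important cuts number ``at most $4^k$ in total'', so that all subfamilies of at most $k$ of them can be enumerated in $\binom{4^k}{\le k}=2^{\bigO(k^2)}$ time. The bound of $4^k$ on important cuts of size at most $k$ holds only for a \emph{fixed} source--sink pair $(X,Y)$. The separators arising in the structural lemma are important $(w,T)$-separators (or $(T,w)$-separators) for vertices $w$ ranging over the shadow, i.e.\ over essentially all of $V(D)$; their total number is up to $n\cdot 4^k$, and enumerating subfamilies of size $k$ gives $n^{\Theta(k)}$ candidates --- an XP, not FPT, bound. This is precisely the obstacle that forces Chitnis et al.\ to \emph{randomly sample} which important separators to take (a correlated choice whose success probability is $2^{-\bigO(k^2)}$, analysed via the inequality $\sum_{S}4^{-|S|}\le 1$, and then derandomised), rather than enumerate them. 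Your final randomised $2$-colouring step gestures at randomisation, but it is applied only to the secondary issue of keeping $X'$ disjoint from the union of behind-regions, after the (unjustified) deterministic enumeration of the subfamily; it does not repair the enumeration itself. Separately, the gluing of the forward and reverse families by pairwise unions needs a single transversal witnessed in both directions --- in the original proof this is handled by a sequential two-phase argument --- and you flag this but leave it unresolved. The first issue is the substantive one: as written, the construction of $\mathcal{Z}^+$ is not of FPT size.
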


\subsection{{\sc Skew Vertex Multicut} is \FPT}
In this section, we present a problem which is known to be
\FPT. This problem was first introduced by \cite{ChenLLOR08}
in the first proof that {\sc Directed Feedback Vertex Set} is \FPT.

\problem
{Skew Vertex Multicut}
{a digraph $D$, an ordered list of pair of vertices 
$(s_1,t_1), \dots, (s_r,t_r)$
and an integer $k$.}
{$k$}
{find, if there exists, a set $X$ of at most $k$ vertices
such that there is no $(s_j,t_i)$-path in $D-X$ if
$j\geq i$.}

\begin{theorem}[\cite{ChenLLOR08}]\label{thm:skew_is_fpt}
The problem {\sc Skew Vertex Multicut} is \FPT
and can be solved in time $\bigO(4^k k^3 n^2)$.
\end{theorem}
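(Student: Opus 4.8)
The plan is to reduce {\sc Skew Vertex Multicut} to the (arc-deletion variant of the) {\sc Skew Edge Multicut} problem that Chen et al.~\cite{ChenLLOR08} solved en route to {\sc Directed Feedback Vertex Set}, and then invoke their bounded-depth branching algorithm. First I would perform the standard vertex-to-edge gadget: replace each vertex $v$ of $D$ by two copies $v^{\text{in}}, v^{\text{out}}$ joined by an arc $v^{\text{in}}v^{\text{out}}$ (the ``vertex arc''), redirect every in-arc of $v$ to $v^{\text{in}}$ and every out-arc to $v^{\text{out}}$, and make every vertex arc deletable while making every original arc undeletable by replacing it with $k+1$ parallel paths (or, more cleanly, by working directly with a designated deletable arc set). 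Each source $s_j$ becomes $s_j^{\text{out}}$ and each sink $t_i$ becomes $t_i^{\text{in}}$. A set $X$ of $\le k$ vertices cutting all $(s_j,t_i)$-paths with $j\ge i$ corresponds exactly to a set of $\le k$ deletable arcs in the new digraph $D'$ cutting all $(s_j^{\text{out}}, t_i^{\text{in}})$-paths with $j\ge i$.

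The core of the argument is then the recursive branching on important separators, exactly as in Chen et al. Observe that if a skew multicut exists, consider the ``last'' request in the order whose pair is still connected; WLOG look at the pair $(s_r,t_r)$ and the requirement that all $(s_r,t_r)$-paths (hence all $(s_j,t_r)$-paths for every $j$, since $j\ge r$ means $j=r$) are cut --- more usefully, reverse the order and consider the request involving $t_1$: every $(s_j,t_1)$-path must be cut for all $j\ge 1$, i.e. $t_1$ must be unreachable from $\{s_1,\dots,s_r\}$ in $D'-X$. The key structural lemma (from \cite{ChenLLOR08}) is that one may assume the restriction of the solution $X$ to the arcs separating $\{s_1,\dots,s_r\}$ from $t_1$ is an \emph{important} $(\{s_1,\dots,s_r\}, t_1)$-cut. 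Since there are at most $4^k$ important cuts of size $\le k$ and they can be enumerated in $4^k n^{\bigO(1)}$ time (Proposition, item~\ref{prop:enum_important_cut}), we branch: guess this important cut $S_1$ of size $k_1\le k$, delete it, remove $t_1$ (which is now unreachable from all the sources) together with the first request, and recurse on the instance with list $(s_2,t_2),\dots,(s_r,t_r)$ and budget $k-k_1$. The depth of the recursion is at most $k$ (each branch spends at least one deletion, and requests with empty budget left are handled trivially by checking reachability), and the branching factor at each node is $4^{k}$, giving a bound of the form $4^{k\cdot k}$ in the worst case; a more careful accounting --- charging the $4^{k_i}$ factor against the $k_i$ units of budget consumed, so that $\prod 4^{k_i} = 4^{\sum k_i}\le 4^k$ --- yields the stated $\bigO(4^k k^3 n^2)$ running time, where the $k^3 n^2$ polynomial factor covers the max-flow computations for enumerating important cuts at each of the $\le k$ levels.

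I would present the correctness of the branching as two directions: completeness (if a solution exists, some branch finds one) follows from the important-cut replacement lemma applied to $X$ restricted to the $t_1$-cut, noting that pushing this part of $X$ to an important cut only enlarges the set reachable from the sources and hence cannot create new violated later requests since those only forbid reachability \emph{into} later sinks from later sources, and the order guarantees compatibility; soundness is immediate since at every step we only delete vertices/arcs that genuinely lie on the relevant paths. The correctness of the vertex-to-edge reduction is routine.

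The main obstacle, and the place the write-up needs the most care, is the important-cut replacement step: one must verify that replacing the ``$t_1$-slice'' of a hypothetical optimal solution by an important cut does not damage any of the later skew constraints. This works precisely because of the ordering --- after $t_1$ is separated from all sources and removed, the surviving instance only asks to separate $s_j$ from $t_i$ for $2\le i\le j$, and enlarging the source-reachable region is harmless for these since none of them involves reaching $t_1$; this monotonicity is exactly the ``skew'' structure that makes the problem \FPT\ whereas general {\sc Directed Multicut} is W[1]-hard. I would also remark that, as claimed in the technical overview, nothing in this argument uses acyclicity of $D$ --- important cuts and the replacement lemma are valid in arbitrary digraphs --- so the algorithm applies verbatim even when the underlying graph of $D'$ is not a DAG.
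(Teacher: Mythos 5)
The paper does not prove this theorem itself --- it is quoted from Chen et al.~\cite{ChenLLOR08} --- so your proposal is measured against the known argument. Your overall architecture matches it (vertex-to-arc reduction, recursive branching over important separators, and the telescoping $\prod_i 4^{k_i}\le 4^k$ accounting, which is the same style of analysis the paper uses for Algorithm~1). However, the one step you yourself single out as the crux --- the pushing lemma --- is stated with the wrong orientation, and the justification you give for it is invalid. You take $t_1$ as the distinguished terminal and replace the slice of $X$ separating $\sigma=\{s_1,\dots,s_r\}$ from $t_1$ by an \emph{important} $(\sigma,t_1)$-cut, i.e.\ one maximizing the set reachable \emph{from the sources}. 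Enlarging the source-reachable region is precisely what can re-open a forbidden $(s_j,t_i)$-path with $i\ge 2$: a vertex of $S\setminus S'$ that happened to block such a path is un-deleted, and the replacement cut $S'$, which is only guaranteed to separate $\sigma$ from $t_1$, need not block it. Concretely, take $V(D)=\{s_1,s_2,a,c,t_1,t_2\}$ with arcs $s_1a$, $s_2a$, $ac$, $ct_1$, $at_2$, requests $(s_1,t_1),(s_2,t_2)$ and $k=1$. Then $X=\{a\}$ is a solution, the minimal $(\sigma,t_1)$-cut inside it is $\{a\}$, and the important $(\sigma,t_1)$-cut dominating it is $\{c\}$; but $D-\{c\}$ retains the path $s_2\,a\,t_2$, so the replacement destroys the solution. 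The contradiction derivation in the replacement lemma produces a walk from some source to $t_i$ avoiding $S'$, which only contradicts the cut property of $S'$ when $t_i=t_1$.

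The fix is to push the cut \emph{away from the distinguished terminal that must be separated from everything on the other side}. Either keep $t_1$ but use an \emph{anti-important} $(\sigma,t_1)$-cut (an important cut in $D^{op}$, maximizing the set of vertices that can reach $t_1$): then the rerouted walk runs from $s_j$ \emph{to $t_1$} while avoiding $S'$, contradicting that $S'$ is a $(\sigma,t_1)$-cut for every source $s_j$. Or, equivalently, take the last source $s_r$, which must be separated from all of $t_1,\dots,t_r$, and use an important $(s_r,\{t_1,\dots,t_r\})$-cut; then the rerouted walk runs from $s_r$ to some $t_i$, again a genuine contradiction. This orientation issue is exactly the one the paper itself is careful about in its own pushing claim (Claim~\ref{claim:pushing_approx}), which distinguishes the important and anti-important cases according to which endpoint plays the role of the hub. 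With that correction the rest of your outline (removing $t_1$ or $s_r$, recursing on the shorter list with the reduced budget, and the $\bigO(4^kk^3n^2)$ bound) goes through.
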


\section{An \FPT algorithm when parameterized by \texorpdfstring{$k+\ell$}{k+l}\label{sec:fpt_k_l}}

This section aims to prove the following theorem
(remember that in {\sc Symmetric Directed Vertex Multicut}, $k$ is the size of
the desired solution, and $\ell$ is the number of cut requests).

\begin{theorem}\label{thm:fpt_with_k_l}
There is an algorithm that solves
{\sc Symmetric Directed Vertex Multicut}
in time $\bigO\left((2\ell+1)^{2\ell} 4^k k^3 n^2\right)$.
\end{theorem}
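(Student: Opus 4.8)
The plan is to give an \FPT Turing reduction from {\sc Symmetric Directed Vertex Multicut} to {\sc Skew Vertex Multicut}, exploiting the fact that a solution $X$ is characterized by the strongly connected component structure of $D-X$. First I would use the structural observation that $X$ is a solution if and only if in $D-X$ no cut pair $(s_i,t_i)$ lies in a common strongly connected component. Suppose $X$ is a hypothetical solution of size at most $k$. Since there are $\ell$ cut requests, the vertices $\{s_1,t_1,\ldots,s_\ell,t_\ell\}$ (after removing any that are in $X$; there are at most $2\ell$ such vertices) are distributed among the strongly connected components of $D-X$, and these components admit a topological order. I would therefore \emph{guess}, by brute force branching, a partition of (a subset of) the $2\ell$ terminal vertices into at most $2\ell$ groups together with a linear order on the groups consistent with a topological order of the condensation. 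The number of such guesses is bounded by roughly $(2\ell+1)^{2\ell}$: each of the $2\ell$ terminal vertices is either placed in $X$ or assigned one of at most $2\ell$ ordered slots. This explains the $(2\ell+1)^{2\ell}$ factor in the running time.

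For each guess, I would first check that the guess is consistent with the cut requests: within each group all terminals must be "compatible", i.e.\ no group may contain both endpoints $s_i$ and $t_i$ of the same cut request (otherwise that request is violated and we discard the guess). Having fixed such a valid guess with ordered groups $G_1 < G_2 < \cdots < G_p$, the remaining task is purely a reachability-cutting task: we must delete a set $X'$ (disjoint from the terminals we decided not to delete) of size at most $k - |X \cap \{s_i,t_i\}|$ so that in $D - X'$ there is no path from any vertex of $G_b$ to any vertex of $G_a$ whenever $b > a$. Indeed, if all such "backward" paths are cut, then the groups really do end up in distinct strongly connected components ordered as guessed, so every cut request is satisfied; conversely, a genuine solution respecting the guessed component structure cuts exactly these backward paths. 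This is precisely an instance of {\sc Skew Vertex Multicut}: take the ordered list of pairs obtained by listing, for each pair of groups $G_a$ with $a < b$ and each $(u,v) \in G_b \times G_a$, the pair $(u,v)$, arranged so that later-source/earlier-target pairs come later in the order — concretely, one can add, for each group $G_a$, a new source vertex and a new sink vertex wired to the vertices of $G_a$, and order the resulting $2p$ pairs appropriately, or simply observe that the skew condition "no $(s_j,t_i)$-path for $j \geq i$" directly encodes "no $G_b \to G_a$ path for $b > a$" with the right indexing.

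By Theorem~\ref{thm:skew_is_fpt}, each such {\sc Skew Vertex Multicut} instance is solvable in time $\bigO(4^k k^3 n^2)$ (the number of added vertices is $\bigO(\ell)$, which we absorb into $n$, and the number of pairs is polynomially bounded). Running over all at most $(2\ell+1)^{2\ell}$ guesses and returning "yes" if any produces a solution (combined with the guessed deleted terminals) of total size at most $k$ gives the claimed bound $\bigO\left((2\ell+1)^{2\ell} 4^k k^3 n^2\right)$. Correctness in one direction is immediate: any solution found is genuinely a symmetric multicut by the structural characterization. In the other direction, if a solution $X$ of size at most $k$ exists, then the guess reading off $X \cap \{s_i,t_i\}$ and the component/order structure of the terminals in $D-X$ is among those tried, it is valid, and $X$ restricted to non-terminals is a feasible skew multicut for that guess, so the algorithm reports "yes".

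I do not expect a serious obstacle here; the main thing to get right is the bookkeeping that translates the guessed ordered partition into a correctly-indexed {\sc Skew Vertex Multicut} instance, and the observation — which follows from the strongly-connected-component characterization — that cutting all backward paths between ordered groups is both necessary and sufficient to realize the guessed structure. One minor subtlety worth spelling out is that we never need more than $2\ell$ groups (since there are at most $2\ell$ surviving terminals), which keeps the number of guesses bounded by a function of $\ell$ alone; and that terminals placed in $X$ reduce the budget passed to the skew instance accordingly.
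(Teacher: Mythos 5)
Your proposal is correct and follows essentially the same route as the paper: guess the ordered partition of the at most $2\ell$ terminals into a deleted part and topologically ordered strongly connected groups (giving the $(2\ell+1)^{2\ell}$ factor), reject guesses placing both endpoints of a request in one group, and reduce the residual ``cut all backward paths between groups'' task to {\sc Skew Vertex Multicut} solved via Theorem~\ref{thm:skew_is_fpt}. The only cosmetic difference is that the paper contracts each group to a single vertex and uses consecutive pairs rather than adding source/sink gadgets; the indexing and non-deletability of these representative vertices is exactly the bookkeeping you flag, and it is handled the same way in both arguments.
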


\begin{proof}
Let $I=(D,C,k)$ be a {\sc Symmetric Directed Vertex Multicut}
instance.
We suppose that $I$ is a yes-instance and let $X_{OPT}$
be a solution for $I$. Let $T=\bigcup_{(s,t) \in C} \{s,t\}$.

Let $T_0, T_1, \dots, T_r$ with $r \leq 2\ell$ be a partition
of $T$ such that:
\begin{itemize}
    \item $T_0 = X_{OPT} \cap T$,
    \item for every $i \in \{1, \dots, r\}$ and every
        $t,t' \in T_i$, $t$ and $t'$ are strongly connected
        in $D-X_{OPT}$,
    \item there is no $(T_j,T_i)$-path in $D-X_{OPT}$ if $j>i$.
\end{itemize}
Such a partition exists: consider the strongly connected components of $D-X_{OPT}$
and order them into a topological order $C_1, C_2, \dots, C_r$,
that is an ordering such that for every arc $uv$ in $D-X_{OPT}$
with $u \in C_i$ and $v \in C_j$, we have $i \leq j$.
Then set $T_i = C_i \cap T$ for every $i \in \{1, \dots, r\}$.

The first step of our algorithm guesses that partition,
thereby multiplying the running time by at most $(2\ell+1)^{2\ell}$.
Reject any partition where $s, t \in T_i$ for any $(s,t) \in C$ and any $i$.
Now, we consider the digraph $D'$ obtained by removing
$T_0$ from $D$ and merging each $T_i$ into a single vertex $t_i$,
for every $i=1, \dots, r$.

Let $I' = (D', \{(t_1,t_2), \dots, (t_{r-1},t_r)\}, k-|T_0|)$,
a {\sc Skew Vertex Multicut} instance.
Clearly, $X_{OPT} \setminus T_0$ is a solution for $I'$,
by definition of $T_0, \dots, T_r$.
Reciprocally, if $I'$ has a solution $X'$, then
consider $X = T_0 \cup X'$, which has size at most $(k-|T_0|)+|T_0| = k$.
If $X$ is not a solution for $I$, then there exists
$(s,t) \in C$ strongly connected in $D-X$.
Then, $s$ and $t$ are in the same $T_i$ for some $i$,
and thus $s$ and $t$ are strongly connected in $D-X_{OPT}$,
contradicting the fact that $X_{OPT}$ is a solution for $I$.

Thus, one can solve {\sc Symmetric Directed Vertex Multicut}
by first guessing $T_0, \dots, T_r$ and then solving that
{\sc Skew Vertex Multicut} instance using Theorem~\ref{thm:skew_is_fpt}.
This algorithm has running time at most
$\bigO\left((2\ell+1)^{2\ell} 4^k k^3 n^2\right)$.
\end{proof}

\section{A \texorpdfstring{$2$}{2}-approximation algorithm\label{sec:2approx}}

In this part, we give an \FPT algorithm that finds a solution
of size at most $2k$ for {\sc Symmetric Directed Vertex Multicut}  
if it is known that there exists a solution of size at most $k$.

\subsection{Iterative compression and first guesses}

This section aims to prove that it is enough to find a $2$-approximation
algorithm for the following problem:

\problem
{Symmetric Directed Vertex Multicut Compression}
{A digraph $D$, a set of pair of vertices 
$C=\{(s_1, t_1), \dots, (s_\ell,t_\ell)\}$, an integer $k$,
and a solution $Y$ of the {\sc Symmetric Vertex Multicut} instance
$(D,C,k)$, of size at most $2k+1$, with an ordering $y_1, \dots, y_r$ of $Y$.}
{$(k,|Y|)$}
{Find, if there exists, a set $X$ of at most $k$ vertices disjoint from $Y$
such that:
\begin{enumerate}
    \item for every pair of terminals $(s,t) \in C$ with $s,t \not\in X$,
          $s$ and $t$ are not strongly connected in $D-X$, and
    \item there is no $(y_j,y_i)$-path in $D-X$ if $j>i$.
\end{enumerate}
}

\begin{proposition}
Let $t(k,|Y|)$ be a positive function that is non decreasing if one
parameter is fixed, and $c \geq 2$ a constant.

If {\sc Symmetric Directed Vertex Multicut Compression} has
a $2$-approximation algorithm $\mathcal{A}'$ with time complexity
$t(k,|Y|) n^c$,
then {\sc Symmetric Directed Vertex Multicut}
has a $2$-approximation algorithm $\mathcal{A}$ with time complexity
at most $(2k+2)^{2k+1} t(k, 2k+1) n^{c+1}$.
\end{proposition}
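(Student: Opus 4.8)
The plan is to apply the generic iterative compression framework from Proposition~\ref{prop:iterative_compression}, but with one extra twist that is specific to the compression problem as stated: besides passing a solution $Y$ of size at most $2k+1$, we must also supply an \emph{ordering} $y_1, \dots, y_r$ of $Y$. So I would first invoke Proposition~\ref{prop:iterative_compression} (applied to $\mathcal{L}=$ {\sc Symmetric Directed Vertex Multicut} with $\alpha = 2$) to reduce, at the cost of a factor $n$ in the running time, to the compression task where we are given a solution $Y$ with $|Y| \le 2k+1$ but \emph{no} ordering. The standard iterative-compression bookkeeping is exactly what that proposition packages: we process the vertices of $D$ one at a time, maintaining a $2$-approximate solution $Y$ for the current induced subgraph, of size at most $2k+1$ (if ever it would exceed $2k+1$ we know there is no solution of size $\le k$ and can reject); adding the next vertex to $Y$ yields a $2$-approximate solution of size at most $2k+2$ for the enlarged graph, and we call the compression routine to shrink it back down.

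Next I would show how to produce the missing ingredient, namely the ordering, by a guessing step. Given the solution $Y = \{y_1, \dots, y_r\}$ with $r \le 2k+1$, recall that a symmetric multicut is characterised by the strongly connected component structure of $D - X$. So if a solution $X$ with $X \cap Y = \emptyset$ exists, then the vertices of $Y$ are distributed among the strongly connected components of $D-X$, and those components admit a topological order; pulling this order back to $Y$ gives a linear preorder, which we refine arbitrarily to a total order $y_1 < \dots < y_r$ consistent with it, so that $y_i$ reaches $y_j$ in $D-X$ only if $i \le j$ — equivalently, there is no $(y_j, y_i)$-path for $j > i$, which is precisely condition~2 of the compression problem. (If two of the $y$'s are strongly connected in $D-X$ they get adjacent positions in some valid order; either order between them is fine, since the ``$j>i$'' condition is then not triggered.) Hence we enumerate all $r! \le (2k+1)!$ orderings of $Y$; since $(2k+1)! \le (2k+2)^{2k+1}$, this costs a factor $(2k+2)^{2k+1}$. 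For at least one guessed ordering, the genuine solution $X$ satisfies both conditions of the compression problem, so running $\mathcal{A}'$ on $(D, C, k, Y, \text{ordering})$ for that guess returns a set of size at most $2k$; for every guessed ordering where $\mathcal{A}'$ returns a set $X'$, that $X'$ is automatically a feasible symmetric multicut of size $\le 2k$ (condition~1 alone guarantees this), so we simply output the first such $X'$, and if none of the guesses yields an answer we report that there is no solution of size $\le k$.

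Putting the pieces together, the compression-with-ordering routine takes time $(2k+2)^{2k+1}\, t(k, 2k+1)\, n^{c}$ on an $n$-vertex graph (the $(2k+2)^{2k+1}$ from ordering guesses, $t(k,|Y|)$ from $\mathcal{A}'$ with $|Y| \le 2k+1$, and using monotonicity of $t$), and wrapping this in the $n$ iterations of iterative compression gives the claimed bound $(2k+2)^{2k+1}\, t(k, 2k+1)\, n^{c+1}$ for $\mathcal{A}$; the hypothesis $c \ge 2$ ensures there is enough slack to absorb the lower-order polynomial factors (the per-step max-flow / reachability computations and the construction of $Y'$) into $n^{c+1}$. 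The one place that needs a careful argument — and the only real content beyond routine bookkeeping — is the ordering-guessing step: one must check that consistency of $y_1 < \dots < y_r$ with the topological order of the SCCs of $D-X$ is genuinely equivalent to ``no $(y_j,y_i)$-path in $D-X$ for $j>i$'', including the subtlety around $y$'s lying in a common SCC, so that the \emph{true} solution $X$ is actually feasible for \emph{some} guessed ordering. Everything else is a direct instantiation of Proposition~\ref{prop:iterative_compression}.
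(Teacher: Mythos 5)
There is a genuine gap, in fact two, both stemming from the fact that you guess only an \emph{ordering} of $Y$ while the paper guesses an \emph{ordered partition} of $Y$ (with a distinguished ``deleted'' block, followed by deletion and merging).

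First, the compression problem demands a solution $X$ \emph{disjoint from} $Y$, and your argument only treats the case where such an $X$ exists (``if a solution $X$ with $X\cap Y=\emptyset$ exists, then\dots''). If every solution of size at most $k$ intersects $Y$, then $\mathcal{A}'$ legitimately answers ``no'' for every guessed ordering and your wrapper wrongly rejects. The paper handles this by also guessing $Y_0 = X_{OPT}\cap Y$, removing $Y_0$ from $D$ and decreasing $k$ by $|Y_0|$, and then adding $Y_0$ back to whatever $\mathcal{A}'$ returns. Second, your parenthetical claim about two vertices of $Y$ lying in a common strongly connected component of $D-X$ is backwards: if $y_i$ and $y_j$ ($i<j$) are strongly connected in $D-X$, then there \emph{is} a $(y_j,y_i)$-path in $D-X$, so condition~2 of the compression problem is violated for \emph{both} possible relative orders --- the condition is not ``skipped'' for such pairs, it quantifies over all $j>i$. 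Hence the true solution $X$ is feasible for \emph{no} guessed ordering, and again the algorithm can wrongly reject. The paper's fix is to guess the partition of $Y\setminus Y_0$ into its SCC-classes under $D-X_{OPT}$ (in topological order) and merge each class into a single vertex $y_i$; after merging, distinct $y_i$'s really are pairwise non-strongly-connected and condition~2 becomes satisfiable. This also explains the $(|Y|+1)^{|Y|}\le(2k+2)^{2k+1}$ factor: it counts ordered partitions $Y_0,\dots,Y_r$, not the $r!$ orderings you enumerate (the numerical bound coincidentally matches, but the enumeration space you describe does not cover the guesses actually needed).
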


\begin{proof}
First, we directly apply Property~\ref{prop:iterative_compression}
with $\alpha=2$ and thus it is enough to reduce the compression 
problem of {\sc Symmetric Directed Vertex Multicut}
to {\sc Symmetric Directed Vertex Multicut Compression}.

Consider an instance $I=(D,C,k,Y)$ of
that compression problem which is supposed to be a yes-instance,
with an optimal solution $X_{OPT}$.
It is enough to show that a $2$-approximation for $I$
can be found with at most $(|Y|+1)^{|Y|}$ calls to $\mathcal{A}'$.
To do that, we guess the structure of $Y$ in $D-X_{OPT}$.
More precisely, we guess a partition of 
$Y$ into $Y_0, Y_1, \dots Y_r$
such that:
\begin{enumerate}
    \item $Y_0 = X_{OPT} \cap Y$, and
    \item if $y,y' \in Y_i$ then $y$ and $y'$ are strongly
          connected in $D-X_{OPT}$, and
    \item there is no $(Y_j,Y_i)$-path in $D-X_{OPT}$ if $j>i$.
\end{enumerate}
Such a partition exists by taking the intersection of the strongly
connected components of $D-X_{OPT}$ with $Y$.
This guess multiplies the running time by at most
$(|Y|+1)^{|Y|} \leq (2k+2)^{2k+1}$.

We now claim that the instance of the compression problem
$I'$ obtained by
\begin{enumerate}
    \item removing $Y_0$ from $D$ and decreasing $k$
        by $|Y_0|$, and
    \item merging each $Y_i$ into a single vertex $y_i$,
\end{enumerate}
is equivalent to $I$.
More precisely, if $I$ is a yes-instance, then $I'$ too
by taking $X_{OPT} \setminus Y$ as a solution. Reciprocally,
if $I'$ has a solution $X'$ of size at most $2(k-|Y_0|)$
then $X' \cup Y_0$ is a solution for $I$
of size at most $2(k - |Y_0|) + |Y_0| \leq 2k$.
This proves the property.
\end{proof}

The remaining of this section shows that 
{\sc Symmetric Directed Vertex Multicut Compression}
has a $2$-approximation algorithm.

\subsection{Finding a skew multicut of \texorpdfstring{$Y$}{Y}
\label{subsec:find_skew_multicut}}
The first step of our algorithm computes a set 
$X_0 \subseteq V(D) \setminus Y$ of at most $k$ vertices
such that there is no $(y_j,y_i)$-path in $D-X_0$ if $j>i$.

To do that, we use the problem {\sc Skew Vertex Multicut}
that is known to be \FPT.
We directly apply Theorem~\ref{thm:skew_is_fpt} to the instance
$(D, ((y_1,y_2), (y_2, y_3), \dots, (y_{r-1},y_r)), k)$
to compute a set $X_0$ of at most $k$ vertices as wanted.
Indeed, by definition of {\sc Skew Vertex Multicut}, for every $j>i$,
there is no $(y_j,y_i)$-path in $D-X_0$.
This strong property will allow us to find in the next subsection
a solution of size at most $k$ in $D-X_0$.

\subsection{Finding a solution in the simplified instance\label{subsec:find_sol_in_simplified_instance}}

This section shows how to compute a solution for
$I = (D-X_0,C,k,Y)$. This will result in a set $X_1$
of size at most $k$ such that there is no pair $s_i,t_i$
strongly connected in $D - X_0 - X_1$, that is, $X_0 \cup X_1$
is a solution of size at most $2k$.

To do that, first note that any vertex $v \in V(D)\setminus Y$
can be strongly connected with at most one vertex in $Y$ in $D-X_0$.
Our first claim shows that we can assume that exactly one vertex 
in $Y$ is strongly connected with $v$.
\begin{claim}
If $v \in V(D) \setminus (X_0 \cup Y)$ is strongly connected
to no vertex in $Y$ in $D-X_0$, then 
$I' = (D-X_0-v, C \setminus \{ab \in C \mid a=v \text{ or } b=v\}, k, Y)$
and $I$ have the same set of solutions.
\end{claim}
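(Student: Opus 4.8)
The plan is to show that the deleted vertex $v$ is irrelevant to every cut request, so that removing $v$ together with all requests in which $v$ participates changes nothing. I would start from two easy observations. Since $X_0$ was chosen as a skew multicut for $Y$, there is no $(y_j,y_i)$-path in $D-X_0$ for $j>i$, hence condition~(2) of the compression problem holds vacuously in $D-X_0$ and in every subgraph of it, and only condition~(1) is at stake. Moreover, $Y$ is still a symmetric multicut for $(D-X_0,C)$ — deleting $X_0$ only destroys strong connectivity — so no strongly connected set of $(D-X_0)-Y$ contains both endpoints of a cut request. Let $K$ be the strongly connected component of $v$ in $D-X_0$; by hypothesis $K\cap Y=\emptyset$, so $K$ is also a strongly connected set of $(D-X_0)-Y$ and therefore contains at most one endpoint of each cut request.

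Next I would establish the two statements that do the work. (i) For every $(a,b)\in C$ with $v\in\{a,b\}$, the vertices $a$ and $b$ are not strongly connected in $D-X_0$: otherwise a closed walk through $a$ and $b$ either meets some $y\in Y$, making $v$ strongly connected to $y\in Y$ and contradicting the hypothesis, or avoids $Y$, so that $a,b\notin Y$ and $a,b$ are strongly connected in $(D-X_0)-Y$, contradicting that $Y$ is a symmetric multicut. (ii) For every $(s,t)\in C$ with $v\notin\{s,t\}$ and every $X\subseteq V(D-X_0)\setminus Y$ with $v\notin X$, the pair $s,t$ is strongly connected in $D-X_0-X$ if and only if it is strongly connected in $(D-X_0-v)-X$: the ``if'' direction is monotonicity under subgraphs; for ``only if'', if some walk between $s$ and $t$ in $D-X_0-X$ used $v$ then $s$, $t$ and $v$ would lie on a common closed walk, hence $s,t\in K$, hence $s,t\notin Y$ and $s,t$ strongly connected in $(D-X_0)-Y$ — impossible — so an $s$--$t$ walk and a $t$--$s$ walk both avoid $v$ and survive in $(D-X_0-v)-X$.

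Finally I would assemble the equivalence of $I$ and $I'$. For any candidate $X$ with $v\notin X$, condition~(2) holds for both instances, the requests discarded in passing from $C$ to $C\setminus\{ab : v\in\{a,b\}\}$ are satisfied automatically by~(i), and by~(ii) every surviving request has its endpoints strongly connected in $D-X_0-X$ exactly when it does in $(D-X_0-v)-X$; hence $X$ is a solution of $I$ iff it is a solution of $I'$. If $X$ is a solution of $I$ with $v\in X$, then $X\setminus\{v\}$ is again a solution of $I$ — condition~(2) is vacuous, the requests touching $v$ are automatic by~(i), and reinserting $v$ cannot strongly connect the endpoints of any other request by~(ii) — hence also a solution of $I'$. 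Thus $I$ and $I'$ have the same solutions, a solution of $I$ containing $v$ being replaced by the same set with $v$ deleted. I expect statement~(ii) to be the only nonroutine step: one must rule out that $v$ was the unique vertex holding some genuine cut-request pair together, and this is exactly where the hypothesis that $v$ is strongly connected to no vertex of $Y$ combines with the validity of $Y$ as a solution.
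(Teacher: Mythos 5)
Your proof is correct and rests on the same key fact as the paper's: every closed walk witnessing a violated cut request must pass through $Y$ (since $Y$ is a symmetric multicut), while the hypothesis prevents $v$ from lying on any closed walk through $Y$, so $v$ is irrelevant to every surviving request and the requests involving $v$ are automatically satisfied. Your version is merely more detailed (splitting into claims (i) and (ii) via the strong component $K$ of $v$, and explicitly handling a hypothetical solution of $I$ containing $v$), but the argument is essentially the paper's.
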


\begin{proof}
Clearly, if $I$ has a solution $X'$, then $X'$ is a solution for $I'$
as every closed walk in $D-X_0-v$ is also in $D-X_0$.
Reciprocally, if $X'$ is a solution for $I'$, then
adding $v$ to $D-X_0-v-X'$ does not create any
closed walk passing through at least one vertex in $Y$.
But any closed walk passing through a cut request $(s,t) \in C$
must pass through at least one vertex in $Y$. It follows
that no pair of terminals is strongly connected in $D-X_0-X'$
and $X'$ is a solution for $I$.
\end{proof}

Thus, we can remove every vertex strongly connected to no
vertex in $Y$.
We now denote by $\ell(v)$ the unique integer such that $v$ is strongly
connected with $y_{\ell(v)}$.

\begin{claim}
Let $(s,t) \in C$ be a terminal arc.
If $\ell(s) \neq \ell(t)$, then $I''=(D, C \setminus \{(s,t)\}, k,Y)$
and $I'$ have the same set of solutions.
\end{claim}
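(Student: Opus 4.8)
Recall that in the instance $I'$ we have removed $X_0$, removed every vertex not strongly connected to any $y_i$ in $D - X_0$, and deleted the corresponding cut requests; and $X_0$ already kills all backward paths, so in $D - X_0$ a vertex $v$ with $\ell(v) = i$ can reach $y_j$ only if $j \ge i$. The claim asserts that if $(s,t) \in C$ satisfies $\ell(s) \ne \ell(t)$, then this cut request is already automatically satisfied by any solution $X'$ of the remaining instance, so it can be dropped.

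**Plan.** The direction "$I'$-solution $\Rightarrow$ $I''$-solution" is trivial: $I''$ has strictly fewer constraints. For the converse, let $X'$ be a solution of $I''$, so $|X'| \le 2k$ (or $\le k$, depending on which bound we are tracking), $X' \cap Y = \emptyset$, $X'$ kills all backward $(y_j, y_i)$-paths, and every remaining cut request of $I''$ is satisfied in $D - X_0 - X'$. I want to show $(s,t)$ is also satisfied, i.e. $s$ and $t$ are not strongly connected in $D - X_0 - X'$. Suppose for contradiction they are, so there is a closed walk $W$ through both $s$ and $t$ in $D - X_0 - X'$. Any such closed walk lies entirely inside one strongly connected component of $D - X_0 - X'$. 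The key structural fact is that, since $X' $ preserves the skew-multicut property of $Y$, in $D - X_0 - X'$ each $y_i$ reaches $y_j$ only for $j \ge i$; hence at most one $y_i$ lies in any given strong component of $D - X_0 - X'$, and more importantly, every strong component of $D - X_0 - X'$ that contains a vertex strongly connected (in $D - X_0$) to some $y_i$ is "dominated" by $y_i$ in the following sense: I claim the whole closed walk $W$ must be strongly connected, in $D - X_0$, to a single $y_i$. This is because $W \subseteq D - X_0 - X' \subseteq D - X_0$, so $W$ lies in one strong component of $D - X_0$, which contains exactly one $y_i$ (every vertex of $D - X_0$ is strongly connected to exactly one $y_i$ after the pruning step). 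Therefore $\ell(s) = \ell(t) = i$, contradicting $\ell(s) \ne \ell(t)$.

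**The actual argument, cleanly.** So the proof is essentially: $W$ is a closed walk in $D - X_0 - X'$, hence a closed walk in $D - X_0$; in $D - X_0$ every vertex is strongly connected to exactly one $y_i$, and two vertices on a common closed walk are strongly connected to each other, hence to the same $y_i$; applying this to $s, t \in W$ gives $\ell(s) = \ell(t)$, a contradiction. Thus no such $W$ exists, $s$ and $t$ are not strongly connected in $D - X_0 - X'$, and $X'$ satisfies the cut request $(s,t)$ as well, so $X'$ is a solution of $I'$.

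**Main obstacle.** There is no serious obstacle; the only point that needs care is making sure the earlier pruning step genuinely guarantees that \emph{every} vertex of $D - X_0$ (not just those outside $X_0 \cup Y$) is strongly connected to exactly one $y_i$ — for $y_i$ itself this is immediate, and for other vertices it is exactly the conclusion of the previous claim. One should also state explicitly that $\ell$ is well-defined on all of $V(D - X_0)$ at this stage, and note that $s, t \notin X_0 \cup X'$ (else they are not strongly connected to anything in $D - X_0 - X'$ and the request is trivially satisfied), so that $\ell(s), \ell(t)$ are defined. With those remarks in place the proof is a couple of lines.
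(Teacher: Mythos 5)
Your proof is correct and is essentially the paper's own argument: the forward direction is trivial, and for the converse one observes that $s$ and $t$ are not strongly connected in $D-X_0$ (since $\ell(s)\neq\ell(t)$ and $\ell$ is constant on strong components of $D-X_0$), hence not in the subgraph $D-X_0-X'$ either. The detour through skew-multicut properties of $X'$ in your ``Plan'' is unnecessary, as your own clean version acknowledges.
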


\begin{proof}
Clearly, if $I'$ has a solution, then $I''$ too.
Reciprocally, if $I''$ has a solution $X''$, then
every terminal arc different from $s,t$ is not strongly connected
in $D-X_0-X''$. But $s$ and $t$ can not be strongly connected
as $s$ and $t$ are not strongly connected in $D-X_0$.
Thus, $X''$ is a solution for $I''$ too.
\end{proof}

We now assume that for every pair of terminal $s,t$, $\ell(s)=\ell(t)$.
The next claim shows that we can process each strongly connected
component in $D-X_0$ independently.
\begin{claim}
If there is an arc $uv$ with $u$ and $v$ not strongly connected
in $D-X_0$, then $I'' = (D-uv, C, k, Y)$ and $I'$ have the same set
of solutions.
\end{claim}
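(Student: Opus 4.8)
Since $D$ and $D-uv$ have the same vertex set and the same request set $C$, the claim says precisely that a set $X'\subseteq V(D)\setminus Y$ with $|X'|\le k$ fulfils the two conditions of the compression problem in $D$ if and only if it fulfils them in $D-uv$. I would prove the two directions separately.

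One direction is immediate: if $X'$ is a solution for $I'$, then $(D-uv)-X'$ is a subgraph of $D-X'$, and deleting an arc cannot create a new $(y_j,y_i)$-path nor merge two vertices into one strongly connected component, so both conditions, which hold in $D-X'$, still hold in $(D-uv)-X'$. Moreover, if $u\in X'$ or $v\in X'$ then $D-X'$ and $(D-uv)-X'$ are literally the same graph and both directions are trivial, so in what follows I may assume $u,v\notin X'$.

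For the converse, let $X'$ be a solution for $I''=(D-uv,C,k,Y)$. It is disjoint from $Y$ and of size at most $k$, so only the two cut conditions need to be re-checked in $D$. Condition~2 is automatic: the current instance graph is a subgraph of the graph on which $X_0$ was computed as a skew multicut of $Y$ in Section~\ref{subsec:find_skew_multicut}, so neither that graph nor $D-X'$ contains a $(y_j,y_i)$-path with $j>i$. For condition~1, suppose for contradiction that some request $(s,t)\in C$ has $s$ and $t$ strongly connected in $D-X'$, and fix a closed walk $W$ in $D-X'$ visiting both $s$ and $t$. Since $u,v\notin X'$, every arc of $W$ other than $uv$ also belongs to $(D-uv)-X'$; and $s,t$ are \emph{not} strongly connected in $(D-uv)-X'$ because $X'$ solves $I''$, so $W$ must use the arc $uv$. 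But then $u$ and $v$ both lie on the closed walk $W$, hence are strongly connected in $D-X'$, and a fortiori in $D-X_0$. This contradicts the hypothesis that $u$ and $v$ are not strongly connected in $D-X_0$. Therefore condition~1 holds and $X'$ is a solution for $I'$.

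The argument has no real obstacle: its only ingredient is the elementary fact that a closed walk traversing an arc $uv$ certifies $u$ and $v$ to be strongly connected, combined with monotonicity of strong connectivity under vertex and arc deletion. The point of the claim is that, applied exhaustively, it lets us delete every arc joining distinct strongly connected components of $D-X_0$, after which the instance decomposes into independent pieces, one per vertex $y_i\in Y$, each a ``central-vertex'' instance where every closed walk through a request passes through $y_i$; those can then be solved separately.
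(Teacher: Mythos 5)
Your proof is correct and takes essentially the same route as the paper, whose entire argument is the one-line observation that re-adding the arc $uv$ to $D-X_0-uv$ cannot create any closed walk (a closed walk traversing $uv$ would certify that $u$ and $v$ are strongly connected in $D-X_0$, contradicting the hypothesis). You merely spell this out via the request pair $(s,t)$ and additionally note that the skew-multicut condition on $Y$ is unaffected, which the paper leaves implicit.
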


\begin{proof}
If $I'$ has a solution $X'$, then $X'$ is clearly a solution for
$I''$. Reciprocally, if $X''$ is a solution for $I''$,
then adding $uv$ to $D-X_0-uv$ does not create any closed walk,
and thus $X''$ is a solution for $I'$ too.
\end{proof}

Now, we assume that $D-X_0$ has $|Y|$ weakly connected components
$Y_1, \dots, Y_r$ such that for every $i$, $V(Y_i) \cap Y = \{y_i\}$.
Observe that now the weakly connected components are strongly connected.
Let $X_{OPT}$ be an optimal solution for $I'$.
Then we guess the values $k_i = |X_{OPT} \cap Y_i|$,
which multiplies the complexity of our algorithm by at most
$(k+1)^{|Y|} = k^{\bigO(k)}$.
Now, we solve each instance $I_i = (Y_i, C, k_i, \{y_i\})$
independently.

The key result is the following ``pushing'' claim,
that shows how to construct $X_1$ as a union of important cuts.
We denote by $X_{i,OPT} = X_{OPT} \cap Y_i$ a solution of $I_i$,
that we suppose to exist.
\begin{claim}\label{claim:pushing_approx}
Let $(s,t) \in C$ be a terminal arc strongly connected in $Y_i$.
Let $(a,b) \in \linebreak \{(s,y_i),(y_i,s),(t,y_i),(y_i,t)\}$ be such that
$X_{i,OPT}$ includes an $(a,b)$-cut.
\begin{itemize}
    \item if $a=y_i$, let $S$ be the set of vertices in $X_{i, OPT}$
          with an in-neighbour reachable from $y_i$ in $Y_i-X_{i,OPT}$
          and $S'$ be the anti-important $(a,b)$-cut given by 
          Property~\ref{prop:ext_important_cut}.
          Then $X'=(X_{OPT} \setminus S) \cup S'$ is a solution
          for $I_i$ too,
    \item symmetrically,
          if $b=y_i$, let $S$ be the set of vertices in $X_{i,OPT}$
          with an out-neighbour that reaches $y_i$ in $Y_i-X_{i,OPT}$
          and $S'$ be the important $(a,b)$-cut given by 
          Property~\ref{prop:ext_important_cut}.
          Then $X'=(X_{OPT} \setminus S) \cup S'$ is a solution
          for $I_i$ too.
\end{itemize}
\end{claim}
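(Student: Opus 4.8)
The plan is to run the standard important-cut \emph{pushing} argument. First I would reduce to a single case: reversing all arcs of $Y_i$ (passing to $Y_i^{op}$) exchanges the roles of $s,t$ and $y_i$ inside the pairs, turns an $(a,b)$-cut into a $(b,a)$-cut, turns ``reachable from $y_i$'' into ``reaches $y_i$'', and turns important cuts into anti-important ones. Hence it is enough to treat the case $b=y_i$, i.e. $(a,b)=(c,y_i)$ with $c\in\{s,t\}$ and $X_{i,OPT}$ containing a $(c,y_i)$-cut; the case $a=y_i$ then follows by applying everything in $Y_i^{op}$.

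Next I would set up the two cuts. Let $R$ be the set of vertices of $Y_i-X_{i,OPT}$ that can reach $y_i$; since $X_{i,OPT}$ contains a $(c,y_i)$-cut, $c\notin R$. The set $S$ of the statement is exactly the in-boundary of $R$: any in-neighbour of a vertex of $R$ that lies outside $X_{i,OPT}$ already reaches $y_i$ and so lies in $R$, so the whole in-boundary of $R$ is contained in $X_{i,OPT}$ and equals $S$. Consequently deleting only $S$ (rather than all of $X_{i,OPT}$) does not enlarge the set of vertices reaching $y_i$: in $Y_i-S$ this set is still exactly $R$. Thus $S$ is itself a $(c,y_i)$-cut with associated reach-to-$y_i$ set $R$, and Property~\ref{prop:ext_important_cut} applied in $Y_i^{op}$ yields the promised $S'$: an anti-important $(c,y_i)$-cut with $|S'|\le|S|$ whose associated set $R'$ satisfies $R'\supseteq R$. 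In particular $S'\cap R'=\emptyset$, so $S'\cap R=\emptyset$, $c\notin R'$, and $y_i\notin S'$.

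Feasibility of $X':=(X_{i,OPT}\setminus S)\cup S'$ as a candidate is immediate: $X'\subseteq V(Y_i)\setminus\{y_i\}$, so it is disjoint from $Y$, and $|X'|\le |X_{i,OPT}\setminus S|+|S'|\le |X_{i,OPT}|-|S|+|S|=k_i$. The real content is that $X'$ still separates every terminal pair of $C$. Suppose not: some $(s',t')\in C$ (with $s',t'$ in $Y_i$ and $s',t'\notin X'$) lies on a closed walk $W$ of $Y_i-X'$. Since $W$ avoids $X'\supseteq X_{i,OPT}\setminus S$ but $X_{i,OPT}$ is a solution, $W$ must use some $v\in S$; moreover $v\notin S'$ because $W$ avoids $S'\subseteq X'$. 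Now $v$ has an out-neighbour $w\in R\subseteq R'$, and there is a path from $w$ to $y_i$ lying inside $R$, hence avoiding $S'$; concatenating, $v$ reaches $y_i$ in $Y_i-S'$, so $v\in R'$. Since $W$ is a closed walk avoiding $S'$, every vertex of $W$ reaches $v$, and hence $y_i$, in $Y_i-S'$, so $W\subseteq R'$. From here I would derive a contradiction with $X_{i,OPT}$ being a solution: because $R'\supseteq R$ and the two sets induce the same ``reach-to-$y_i$'' structure after deleting $S$, resp.\ $S'$, and because $S'$ is anti-important, the walk $W$ can be rerouted through $R$ so as to avoid $S$ entirely, producing a closed walk through $s'$ and $t'$ in $Y_i-X_{i,OPT}$ --- impossible.

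The hard part will be exactly this last rerouting step: turning a bad closed walk of $Y_i-X'$ that uses $S$ into a bad closed walk of $Y_i-X_{i,OPT}$, exploiting $R\subseteq R'$ together with the anti-importance (equivalently, the maximality $R'\supseteq R$ under $|S'|\le|S|$) of $S'$. Everything else --- the reversal reduction, the identification of $S$ as the in-boundary of $R$, and the size count --- is routine bookkeeping about boundaries of reachability sets.
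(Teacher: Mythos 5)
The decisive problem is the direction in which you push the cut. In the case $b=y_i$ you construct $S'$ by applying Property~\ref{prop:ext_important_cut} in $Y_i^{op}$, so that the set $R'$ of vertices that can \emph{reach} $y_i$ in $Y_i-S'$ contains $R$; that is the \emph{anti-important} $(c,y_i)$-cut, whereas the statement prescribes the \emph{important} one, which by Property~\ref{prop:ext_important_cut} enlarges the set of vertices reachable \emph{from} $c$. These are different cuts, and the difference is fatal: enlarging the set of vertices that can reach $y_i$ enlarges the strongly connected component of $y_i$ in $Y_i-X'$ and can reconnect another pair. Concretely, take $V(Y_i)=\{y_i,u,v,p,q,s',t'\}$ with arcs $y_ip$, $pv$, $vu$, $uy_i$; $y_it'$, $t'u$, $us'$, $s'y_i$; $y_iq$, $qy_i$; cut requests $(p,q)$ and $(s',t')$; and $X_{i,OPT}=\{u\}$, a valid solution containing a $(p,y_i)$-cut, so $c=p$. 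Here $R=\{y_i,q,s'\}$ and $S=\{u\}$, but the size-$1$ cut maximizing the reach-to-$y_i$ set is $S'=\{v\}$ (its set is $\{y_i,q,s',t',u\}\supseteq R$), and $X'=\{v\}$ leaves the closed walk $y_i\,t'\,u\,s'\,y_i$, so $(s',t')$ is no longer separated. By contrast the important $(p,y_i)$-cut in the sense of Property~\ref{prop:ext_important_cut} is $\{u\}$ itself and the claim holds. The right intuition, in both bullets, is that the cut must be pushed \emph{towards} $y_i$, shrinking its strongly connected component, not away from it.

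Second, even granting your setup, you explicitly leave the contradiction open (``from here I would derive a contradiction \dots the hard part will be exactly this last rerouting step''). That step is the entire content of the claim, and with your $S'$ it cannot be filled: in the example above your deduction $W\subseteq R'$ is perfectly consistent with $W$ being a genuinely new closed walk that admits no rerouting into $Y_i-X_{i,OPT}$. With the correctly oriented push the argument closes without any rerouting: one arranges that every $v\in S\setminus S'$ ends up on the far side of $S'$ from $y_i$, i.e.\ $v$ is reachable from $c$ in $Y_i-S'$ and hence cannot reach $y_i$ in $Y_i-S'$ (otherwise $S'$ would not be a $(c,y_i)$-cut). Then a closed walk through $y_i$ in $Y_i-X'$ cannot visit $S\setminus S'$ at all, hence avoids all of $X_{i,OPT}$ and directly contradicts $X_{i,OPT}$ being a solution. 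Your write-up never establishes a statement of this form, and with the anti-important choice it cannot.
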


\begin{proof}
As $s$ and $t$ are not strongly connected in $Y_i-X_{i,OPT}$,
$X_{i, OPT}$ must contain an $(a,b)$-cut for at least one
$(a,b) \in \{(s,y_i),(y_i,s),(t,y_i),(y_i,t)\}$.
It is enough to show the first point, as the second one is
the first one applied to $D^{op}$ the digraph obtained from $D$
by reversing every arc.

First, as $|S'| \leq |S|$, we have $|X'| \leq |X_{i, OPT}| \leq k_i$.
It remains to show that there is no pair $(s',t') \in C$
strongly connected in $Y_i - X'$.
Suppose that such a counterexample $(s',t')$ exists.
Then there exists a closed walk $P$ passing through $y_i$,
$s'$ and $t'$. This walk must pass through $S' \setminus S$
as it does not exist in $Y_i - X_{i,OPT}$.
But then there exists $v \in S \setminus S'$ reachable from
$y_i$ in $Y_i - S'$, contradicting the fact
that the set of vertices reachable from $y_i$ in $Y_i - S$
includes the set of vertices reachable from $y_i$ in $Y_i - S'$.
\end{proof}

We can now give the algorithm that solves $I_i=(Y_i,C,k_i,\{y_i\})$ as Algorithm~\ref{alg:sec4}.

\begin{algorithm}[hbt!]
$X_i \gets \emptyset$\;
\While{there exists $(s,t) \in C \cap V(Y_i)^2$ 
       strongly connected in $Y_i-X_i$}
{
    guess a direction $(a,b) \in \{(s,y_i),(y_i,s),(t,y_i),(y_i,t)\}$\;
    \eIf{$a=y_i$}{
        guess an anti-important $(a,b)$-cut $S'$ of size at most $k_i - |X_i|$\;
    }{
        guess an important $(a,b)$-cut $S'$ of size at most $k_i-|X_i|$\;
    }
    add $S'$ to $X_i$\;
}
return $X_i$\;
\caption{Algorithm for single-terminal case $I_i=(Y_i,C,k_i,\{y_i\})$}
\label{alg:sec4}
\end{algorithm}

If the algorithm returns a value, then it is clearly
a solution.
We now show that there exists a sequence of guesses that
leads to a solution if it exists. 
More precisely, we show that the following invariant holds:
At every iteration of the loop, there is a possible value of $X_i$
such that $X_i$ can be extended to a solution for $I_i$ if it exists.
This invariant initially holds.
If the results holds at some iteration for a set $X_i$,
let $X_{i,OPT}$ be a solution that contains $X_i$,
and for the first guess take $(a,b)$ such that $X_{i,OPT}$
contains an $(a,b)$-cut $S$.
By Claim~\ref{claim:pushing_approx} there exists an important
or anti-important $(a,b)$-cut $S'$ of size at most $|S|$ such that
$(X_{i,OPT} \setminus S) \cup S'$ is still a solution.
Thus, there exists a solution that contains $S'$ and we can safely
add it to $X_i$.

To see that the algorithm works in time $8^k n^{\bigO(1)}$,
consider the recursion tree formed by recursively branching over all possible
values of a guess, for each guess made in the algorithm.
We denote by $t(k)$ the number of leaves of this recursion tree in the worst case.
We show by induction on $k$ that $t(k)4^{-k} \leq 4^k$.
If $k=0$, the result is clear. Otherwise, if we assume the result
for smaller values of $k$, then we have 
\[
t(k)4^{-k} 
\leq 4 \sum_{S \in \mathcal{S}_k} t(k-|S|)4^{-k} 
\leq \sum_{S \in \mathcal{S}_k} t(k-|S|)4^{-(k-|S|)} 
\leq \sum_{S \in \mathcal{S}_k} 4^{k-|S|} 
\leq 4^k \sum_{S \in \mathcal{S}_k} 4^{-|S|} 
\]
where $\mathcal{S}_k$ is the set of important (or anti-important)
$(a,b)$-cuts that is enumerated in the algorithm.
It follows by Property~\ref{prop:enum_important_cut}
that $t(k) \leq 8^k$.
We note that the algorithm can easily be made deterministic
by replacing each guessing step by an exhaustive branching;
we omit the details.

These two steps give us a $2$-approximation algorithm.
\begin{theorem}
The exists an algorithm with running time
$k^{\bigO(k)} n^{\bigO(1)}$ such that given an instance of
{\sc Symmetric Directed Vertex Multicut} and an integer $k$,
either it concludes that there is no solution of size at most $k$, 
or it returns a solution of size at most $2k$.
\end{theorem}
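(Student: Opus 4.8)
The plan is to assemble the machinery developed in this section rather than to do anything new. First I would invoke Proposition~\ref{prop:iterative_compression} with $\alpha=2$: it suffices to give a $2$-approximation for the compression variant, in which we are additionally handed a solution $Y$ with $|Y|\leq 2k+1$ and an ordering $y_1,\dots,y_r$ of $Y$. Peeling off vertices one at a time turns any $2$-approximation for the compression problem running in time $t(k,|Y|)n^c$ into one for \textsc{Symmetric Directed Vertex Multicut} running in time $t(k,2k+1)n^{c+1}$. Inside the compression problem, the first branching step guesses the partition of $Y$ induced by the strongly connected components of a hypothetical optimal $X_{OPT}$, together with a topological order on those components; this costs a factor $(|Y|+1)^{|Y|}\leq (2k+2)^{2k+1}=k^{\bigO(k)}$, and after contracting each part and deleting the part killed by $X_{OPT}$, we may assume that $y_i$ reaches $y_j$ in $D-X_{OPT}$ only when $i\leq j$ and that no cut request has both endpoints in one part.

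Second, I would split the remaining work into two tasks paid for by two separate budgets of size $k$ each, which is exactly where the factor $2$ enters. Task one is to kill all $(y_j,y_i)$-paths for $j>i$; this is literally a \textsc{Skew Vertex Multicut} instance on the list $(y_1,y_2),(y_2,y_3),\dots,(y_{r-1},y_r)$, solved in FPT time by Theorem~\ref{thm:skew_is_fpt}, producing $X_0$ with $|X_0|\leq k$. In $D-X_0$ every vertex lies in the strong component of at most one $y_i$, and the three cleanup claims let us discard every vertex strongly connected to no $y_i$ (and the requests touching it), discard every request whose endpoints lie in different components, and delete every arc between different components. The instance then falls apart into disjoint strongly connected pieces $Y_1,\dots,Y_r$ with $V(Y_i)\cap Y=\{y_i\}$, so task two reduces to solving the single-terminal instances $I_i=(Y_i,C,k_i,\{y_i\})$ independently, after guessing the budgets $k_i=|X_{OPT}\cap Y_i|$ at a further cost of $(k+1)^{|Y|}=k^{\bigO(k)}$.

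Third, each single-terminal instance is handled by Algorithm~\ref{alg:sec4}: while some request $(s,t)$ is still strongly connected to $y_i$, guess which of the four half-cut directions in $\{(s,y_i),(y_i,s),(t,y_i),(y_i,t)\}$ the optimal solution realises, guess an important (or anti-important) cut of that type using Property~\ref{prop:ext_important_cut}, add it to $X_i$, and recurse. Correctness rests on the pushing Claim~\ref{claim:pushing_approx}: any optimal $X_{i,OPT}$ containing some $(a,b)$-cut can have its boundary part replaced by the corresponding (anti-)important cut without creating a strongly connected request pair, since enlarging the set reachable from $y_i$ only helps. The recursion is bounded by $8^k n^{\bigO(1)}$ via the standard estimate $\sum_{S}4^{-|S|}\leq 1$ over important cuts of size at most $k$ (Property~\ref{prop:enum_important_cut}). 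Combining, $X_0\cup\bigcup_i X_i$ is a symmetric multicut of size at most $k+\sum_i k_i\leq 2k$, and the whole procedure runs in $k^{\bigO(k)}n^{\bigO(1)}$ time. I expect the delicate point to be exactly Claim~\ref{claim:pushing_approx} together with the invariant justifying Algorithm~\ref{alg:sec4} — i.e. checking that after removing the skew multicut $X_0$ the problem really decomposes and that the four-way guess over half-cuts is exhaustive; the rest is bookkeeping of budgets and branching factors.
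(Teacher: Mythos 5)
Your proposal is correct and follows essentially the same route as the paper: iterative compression with $\alpha=2$, guessing the SCC-structure of $Y$, a \textsc{Skew Vertex Multicut} call to obtain $X_0$, the cleanup reductions decomposing $D-X_0$ into single-terminal instances, and the important/anti-important-cut branching justified by the pushing claim with the standard $\sum_S 4^{-|S|}\leq 1$ bound. The budget accounting ($|X_0|\leq k$ plus $\sum_i k_i\leq k$) and the $k^{\bigO(k)}n^{\bigO(1)}$ running-time analysis match the paper's as well.
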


\begin{proof}
Let $I=(D,C,k,Y)$ be a {\sc Symmetric Directed Vertex Multicut 
Compression} instance.
First, compute a skew multicut of $Y$ using
Section~\ref{subsec:find_skew_multicut}.
This gives a set $X_0$ of at most $k$ vertices, if $I$ has a solution.
Then we apply Section~\ref{subsec:find_sol_in_simplified_instance}
to find a set $X_1$ of at most $k$ vertices that is a solution
for $(D-X_0,C,k,Y)$. We can now conclude that $X_0 \cup X_1$ is
a $2$-approximation as $|X_0 \cup X_1| \leq 2k$.
\end{proof}

\section{An exact algorithm for {\sc Symmetric Directed Multiway Cut}}\label{sec:multiway}

In this section, we give an exact (i.e., non-approximate) \FPT algorithm for a particular case
of {\sc Symmetric Directed Vertex Multicut}.

\problem
{Symmetric Directed Multiway Vertex Cut}
{A digraph $D$, a set of terminals $T \subseteq V(D)$, $k \in \mathbb{N}$.}
{$k$}
{find, if there exists, $X \subseteq V(D)$ with $|X| \leq k$ 
such there is no pair of distinct terminals $t,t' \in T \setminus X$
strongly connected in $D-X$.}

\begin{theorem}\label{thm:multiway_is_fpt}
{\sc Symmetric Directed Multiway Vertex Cut} can be solved in
time $2^{\bigO(k^3)}n^{\bigO(1)}$.
\end{theorem}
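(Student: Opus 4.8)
The plan is to follow the high-level strategy already sketched in the technical overview, adapting the \textsc{Directed Subset Feedback Vertex Set} algorithm of Chitnis et al.\ to our setting. First, I would move to the robust formulation described in the overview: instead of terminals, work with a list $A_1,\dots,A_\ell$ of arc sets where each $A_i = S_i \times T_i$ is a near-biclique, and we seek $X$ of size at most $k$ such that no closed walk in $D-X$ uses arcs from two distinct $A_i$. This subsumes \textsc{Symmetric Directed Multiway Vertex Cut}: replace each (possibly non-deletable) terminal $t$ by $k+1$ false twins, and let $S_t$ be the twin copies, $T_t$ their out-neighbours; then a closed walk through two distinct original terminals corresponds exactly to one using arcs from two distinct $A_i$. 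The crucial gain is that near-biclique arc sets are closed under the vertex-bypassing operation used in shadow removal, which the raw terminal formulation is not.

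Next I would set up iterative compression via Proposition~\ref{prop:iterative_compression} with $\alpha=1$, reducing to a compression problem where we additionally have a solution $Y$ with $|Y|\le 2k+1$ together with an ordering $y_1<\dots<y_r$, and we look for a solution $X$ disjoint from $Y$ such that in $D-X$, $y_i$ reaches $y_j$ only if $i\le j$ — exactly as in Section~\ref{sec:2approx} and in Chitnis et al. After guessing the intersection $X\cap Y$ and the strongly-connected-component/topological structure of $Y$ (at a cost of $(2k+2)^{O(k)}$), I would apply Theorem~\ref{thm:cover_shadow} with $T = Y$ to obtain the family $Z_1,\dots,Z_t$ of candidate shadow-covers; guess the correct index $i$ and, for that $Z_i$, apply the bypassing operation to delete $Z_i$ while preserving reachability between $Y$ and the non-shadow part. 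Because each $A_i$ is a near-biclique, bypassing a vertex $v$ keeps the arc sets in near-biclique form (a vertex on the $S_i$ side gets replaced by its relevant predecessors, on the $T_i$ side by successors), so the problem stays in the robust class. After this we may assume the sought solution $X$ has empty shadow, i.e.\ every vertex of $D-X$ is strongly connected to some $y_j$.

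Now I would do the main recursive step: consider $R$, the set of vertices reachable from $y_r$ in $D-X$; by shadow-freeness $R$ is strongly connected to $y_r$. In \textsc{Directed Subset Feedback Vertex Set}, $R$ contains no terminal arc, which drives the important-separators branching via the anti-isolation lemma. Here, $R$ may contain arcs of exactly one index $i_0$ (it cannot contain arcs of two distinct indices, else $R\cup\{y_r\}$ would be a bad closed walk), but $A_{i_0}$ can be arbitrarily large. To handle this I would add a color-coding step: randomly 2-color $V(D)$ so that, with probability $2^{-O(k)}$, all of $N^-(R)$ (the at most $k$ vertices of $X$ with an in-neighbour in $R\cup\{y_r\}$, which must cut $y_r$ from the rest) is monochromatic of one colour and a chosen ``witness'' structure for $i_0$ is the other — this lets us contract/mark the $A_{i_0}$-arcs inside $R$ so that the anti-isolation lemma and important-separator branching of Chitnis et al.\ go through, guessing $i_0$ among $\ell$ choices and then branching over important separators to find $N^-(R)$ and hence $R$. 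Deleting the cut separating $R$ and recursing on $D$ with one fewer terminal reachable gives the solution; the recursion has depth $r\le 2k+1$, each level costing $2^{O(k^2)}$ from shadow removal, $2^{O(k)}$ from color-coding, and $4^{O(k)}$ from important-separator branching, so multiplying through with the $(2k+2)^{O(k)}$ guesses yields $2^{O(k^3)} n^{O(1)}$. Derandomization of the color-coding is standard via splitters.

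The main obstacle, as the overview concedes, is precisely this last step: in \textsc{Directed Subset Feedback Vertex Set} the reachability set $R$ of the top terminal is terminal-arc-free, which is exactly what makes the auxiliary-graph important-separator branching work, whereas here $R$ genuinely may contain an unbounded biclique of $A_{i_0}$-arcs. Getting the color-coding to correctly ``neutralize'' this one index — making the $A_{i_0}$-arcs behave, for the purposes of the separator-branching invariant, as if they were ordinary arcs while still tracking that they must not be combined with any $A_j$, $j\ne i_0$, in the final closed-walk condition — and verifying that bypassing never destroys the near-biclique structure that this relies on, is where essentially all the technical work lies; the iterative-compression scaffolding and the running-time bookkeeping are routine by comparison.
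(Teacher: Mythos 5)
Your scaffolding (near-biclique reformulation, iterative compression, shadow removal via Theorem~\ref{thm:cover_shadow}, peeling off $y_r$ by an important separator and recursing) matches the paper's, but that part is essentially dictated by the technical overview; the step you defer --- handling the one index $i_0$ whose arcs survive inside the reachability set $R$ of $y_r$ --- is exactly where the proof lives, and your sketch of it has two concrete gaps. First, ``guessing $i_0$ among $\ell$ choices'' at each level of the recursion is fatal: $\ell$ can be $\Theta(n)$, the recursion has depth up to $k$, and each guess spawns a branch, so you get $n^{\Omega(k)}$ rather than FPT. The paper avoids this by first computing, via an auxiliary graph and the union of all important $(s,t)$-cuts of size at most $2k+1$ (Claim~\ref{claim:compute_Tc}, the anti-isolation-style step), a candidate set $I$ with $|I|\leq (2k+1)4^{2k+1}$ that provably contains $i_0$; only then is $i_0$ guessed, from $I$. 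Your colour-coding is applied in the wrong order and to the wrong object: the paper's splitter does not mark $N^-(R)$ or the $A_{i_0}$-arcs, it protects a single vertex $v\in T_{i_0}\cap R$ from the solution and gives it high capacity in the auxiliary graph, which is what forces the important cuts there to pick up the gadget vertex $z_{i_0}$ and hence put $i_0$ into $I$.

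Second, you ignore the \emph{active} indices: besides $i_0$, there may be many $i$ with $S_i\cap R\neq\emptyset$ whose entire set $T_i$ is therefore forced into $X_{OPT}$. Because of them, the out-boundary of $R$ is \emph{not} a $(\{y_r\},\bigcup_{i\neq i_0}S_i)$-cut, so the pushing-to-an-important-separator argument you invoke does not go through as stated. The same auxiliary-graph computation in Claim~\ref{claim:compute_Tc} simultaneously returns a set $T_c$ of size $(2k+1)4^{2k+1}$ covering $\bigcup_{i \text{ active}}T_i$; the algorithm guesses and deletes $X_{OPT}\cap T_c$ (emptying every active $A_i$) and also adds the arcs $\{y_r\}\times T_{i_0}$ before the important-separator step, and only then is the boundary of $R$ a legitimate $(\{y_r\}, Y\setminus\{y_r\}\cup\bigcup_{i\neq i_0}S_i)$-cut that can be replaced by an important one. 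Without these two ingredients your recursion is neither FPT nor correct, so the proposal as written does not constitute a proof.
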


Actually, we will prove that a more general problem very closely related
to {\sc Directed Subset Feedback Arc Set} is \FPT.
Chitnis et al.~\cite{ChitnisCHM15dsfvs} proved that the 
problem {\sc Directed Subset Feedback Arc Set} is \FPT.
We adapt here their method to the following problem.

\problem{Arc Terminal Symmetric Multiway Cut}
{
A digraph $D$ having possibly loops, a list $A_1, \dots, A_\ell$ of arcs in $D$,
such that for every $i$, $A_i = S_i \times T_i$ for some 
(not necessarily disjoint) sets $S_i$ and $T_i$ of vertices.
}
{$k$}
{find, if there exists, a set $X$ of at most $k$ vertices such that any closed walk in
$D-X$ intersects at most one $A_i$.}

Note that we allow repetition in the list $A_1, \dots A_\ell$.
In this case, if $A_i=A_j$ for some $i \neq j$, then every closed walk intersecting $A_i=A_j$
has to be cut.
We will call the arcs in $\bigcup_i A_i$ the \emph{terminal arcs}.

First we show that {\sc Symmetric Directed Multiway Vertex Cut} reduces to
{\sc Arc Terminal Symmetric Multiway Cut} in \FPT time.
Indeed, given an instance $I=(D,T=\{t_1, \dots t_\ell\},k)$ of 
{\sc Symmetric Directed Multiway Vertex Cut},
we consider the {\sc Arc Terminal Symmetric Multiway Cut} instance $I'=(D,(A_1, \dots A_\ell),k)$ where $A_i = \{t_i\} \times N^+_D(t_i)$.
Now one can easily see that $X$ is a solution for $I$ if and only if it is a solution
for $I'$.
Hence it is enough to find an \FPT algorithm for
{\sc Arc Terminal Symmetric Multiway Cut}.

\subsection{Iterative compression and first guesses}

By Property~\ref{prop:iterative_compression}, it is enough to find an \FPT
algorithm for the compression problem associated to \textsc{Arc Terminal Symmetric Multiway Cut}.
Thus suppose that a first solution $Y$ of size $k+1$ is given,
and we want to find a solution $X_{OPT}$ of size at most $k$.
First, we guess the intersection $Y \cap X_{OPT}$, and we remove it.
Now we assume that $X_{OPT}$ is disjoint from $Y$.
If two vertices $y,y' \in Y$ are strongly connected in $D-X_{OPT}$,
then we can merge them without breaking the solution $X_{OPT}$, and without
making the instance easier.
Now we can suppose that no two vertices in $Y$ are strongly connected in $D-X_{OPT}$.
Hence there is a topological ordering $y_1, \dots y_{|Y|}$ of $Y$
such that there is no $(y_j,y_i)$-path in $D-X_{OPT}$ if $j>i$.
Given this ordering, we can add the arc $y_iy_j$ for every $i<j$ without
breaking the solution $X_{OPT}$, and without making the instance easier.
To summarise, by multiplying the running time of the algorithm by at most
$(k+2)^{k+1} n^{\bigO(1)}$, it is enough to find an \FPT algorithm
for the following problem.

\problem{Arc Terminal Symmetric Multiway Cut Compression}
{
A digraph $D$ (having possibly loops), a list $A_1, \dots, A_\ell$ of arcs in $D$,
such that for every $i$, $A_i = S_i \times T_i$ for some 
(not necessarily disjoint) sets $S_i$ and $T_i$ of vertices,
and an ordered set $Y=(y_1, \dots,y_r)$ of vertices such that:
\begin{enumerate}
    \item for every $i \neq j$, no closed walk in $D-Y$ intersects both $A_i$ and $A_j$, and
    \item for every $1 \leq i<j \leq r$, $y_i y_j$ is an arc in $D$.
\end{enumerate}
}
{$k+r$}
{find, if there exists, a set $X$ of at most $k$ vertices such that 
\begin{enumerate}
    \item $X$ is disjoint from $Y$, 
    \item any closed walk in $D-X$ intersects at most one $A_i$, and
    \item there is no $(y_j,y_i)$-path in $D-X$ if $j>i$.
\end{enumerate}
}

\subsection{Shadow removal}\label{subsec:reduce_shadow}
Let $I=(D,(A_1, \dots A_\ell),k,Y)$ be an {\sc Arc Terminal Symmetric Multiway 
Cut Compression} instance.
To show that we can assume the solution to be shadowless,
let $\mathcal{F}$ be the family containing all closed walks
intersecting at least two distinct sets $A_i$, $A_j$ 
and all $(y_j,y_i)$-walks for $j>i$.
Note that $\mathcal{F}$ is $Y$-connected and that the problem
is precisely to find an $\mathcal{F}$-transversal $X$ disjoint from $Y$.
We apply Theorem~\ref{thm:cover_shadow} with $\mathcal{F}$, giving us a family
of $t=2^{\bigO(k^2)}\log^2 n$ sets disjoint from $Y$,
and we guess one of them, say $Z$, to be such that if
$I$ has a solution, then there exists a solution $X$
disjoint from $Z$ and with shadow contained in $Z$.
As we consider the shadow from $Y$, vertices in $Y$ can not be in the shadow
of a solution, so we can assume $Z$ and $Y$ disjoint by replacing $Z$ by $Z \setminus Y$.

We now define another instance $I/Z=(D',(A'_1, \dots, A'_\ell),k,Y)$ equivalent to $I$ in the following sense:
\begin{enumerate}
    \item if $I$ has a solution that is disjoint from $Z$ and with shadow contained in $Z$, then $I/Z$ has a shadowless solution, and
    \item if $I/Z$ has a solution, then $I$ does too.
\end{enumerate}
The construction is the following. If $D[Z]$ contains a closed walk $W$ 
such that at least two $A_i,A_j$ intersects $W$, reject $Z$.
Otherwise construct the following. Let a \emph{$Z$-walk} be a 
walk in $D$ with endpoints in $V(D')$ and internal vertices, if any, in $Z$.
\begin{itemize}
    \item $V(D') = V(D) \setminus Z$;
    \item $E(D')$ is the set of all arcs $uv$ such that there is a $Z$-walk 
          from $u$ to $v$ in $D$;
    \item for every $i=1, \dots,\ell$,
          $A'_i$ is the set of arcs $uv$ such that there is a $Z$-walk
          from $u$ to $v$ intersecting $A_i$.
          In particular, $A_i \cap E(D') \subseteq A'_i$ as a $Z$-walk can have no internal vertices.
\end{itemize}

First, we need to check that $I/Z$ is indeed an instance of
{\sc Arc Terminal Symmetric Multiway Cut Compression}
\begin{claim} \label{claim:biclique}
For every $i=1, \dots, \ell$, $A'_i = S'_i \times T'_i$ for some sets
$S'_i$ and $T'_i$ of vertices.
\end{claim}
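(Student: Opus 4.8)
The plan is to unfold the definition of $A'_i$ and show it has the required biclique structure by exhibiting explicit sets $S'_i$ and $T'_i$. Recall that $A_i = S_i \times T_i$ in the original instance, and that $A'_i$ consists of all arcs $uv$ (with $u,v \in V(D') = V(D) \setminus Z$) such that there is a $Z$-walk from $u$ to $v$ passing through some arc of $A_i$. Such a $Z$-walk, by definition, uses at least one arc $ab \in A_i$, i.e. with $a \in S_i$ and $b \in T_i$, where $a$ and $b$ are internal or endpoint vertices of the walk.

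First I would define $S'_i$ to be the set of vertices $u \in V(D')$ such that there is a $Z$-walk from $u$ to \emph{some} vertex $a \in S_i$; more precisely, a walk from $u$ whose internal vertices lie in $Z$ and which ends at a vertex $a \in S_i$ — but since $a$ itself might be in $Z$, I need to be slightly careful and instead say: $S'_i$ is the set of $u \in V(D')$ such that there is a walk in $D$ from $u$ to some $a \in S_i$ with all vertices strictly between $u$ and $a$ lying in $Z$ (equivalently, $u$ can reach $a$ through $Z$, with $u=a$ allowed if $u \in S_i \cap V(D')$). Symmetrically, define $T'_i$ to be the set of $v \in V(D')$ reachable from some $b \in T_i$ through $Z$. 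The claim is then that $A'_i = S'_i \times T'_i$.

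The key step is the two inclusions. For $A'_i \subseteq S'_i \times T'_i$: given $uv \in A'_i$, take the witnessing $Z$-walk $P$ from $u$ to $v$ through an arc $ab \in A_i$; the portion of $P$ from $u$ up to (and including) the occurrence of $a$ has all internal vertices in $Z$ (as they are internal to $P$), so $u \in S'_i$, and symmetrically the portion from $b$ to $v$ certifies $v \in T'_i$. For $S'_i \times T'_i \subseteq A'_i$: given $u \in S'_i$ and $v \in T'_i$, concatenate the $u$-to-$a$ walk through $Z$, the arc $ab \in A_i$, and the $b$-to-$v$ walk through $Z$; the result is a walk from $u$ to $v$ whose internal vertices all lie in $Z$ (here I use that $a, b \in Z$ unless they equal $u$ or $v$ respectively) and which uses the arc $ab \in A_i$, hence $uv \in E(D')$ and in fact $uv \in A'_i$. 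Both inclusions are essentially immediate from the definitions.

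The main obstacle — such as it is — is purely bookkeeping around the endpoint/internal-vertex distinction: a $Z$-walk is allowed to have no internal vertices (so $A_i \cap E(D') \subseteq A'_i$, as noted in the excerpt), and the vertices $a \in S_i$, $b \in T_i$ witnessing membership may themselves be in $Z$ or may coincide with $u$ or $v$. I would handle this by phrasing the definitions of $S'_i$ and $T'_i$ in terms of ``reachability through $Z$'' with the degenerate length-zero case explicitly permitted, so that the concatenation argument goes through uniformly. No deeper idea is needed; the biclique form of $A'_i$ is inherited directly from the biclique form of $A_i$ together with the fact that reachability-through-$Z$ is a property of the source vertex and of the sink vertex independently.
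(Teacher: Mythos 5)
Your overall strategy (exhibit $S'_i$ and $T'_i$ explicitly as ``tails that reach $S_i$ through $Z$'' and ``heads reached from $T_i$ through $Z$'') is viable, but the definitions you give are too weak, and the inclusion $S'_i \times T'_i \subseteq A'_i$ fails for them. You define $u \in S'_i$ whenever there is a walk from $u$ to some $a \in S_i$ with all vertices \emph{strictly between} $u$ and $a$ in $Z$; this places no constraint on $a$ itself, so $a$ may be a vertex of $V(D')$ distinct from $u$. In the backward inclusion you then assert ``here I use that $a,b \in Z$ unless they equal $u$ or $v$ respectively'' --- but that is exactly what your definition does not guarantee, so the concatenated walk need not be a $Z$-walk. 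Concretely: take $Z=\emptyset$, vertices $u,a,b,v$, arcs $ua$, $ab$, $bv$, and $A_i=\{a\}\times\{b\}$. Then your $S'_i$ contains $u$ (the walk $u\to a$ has nothing strictly between) and your $T'_i$ contains $v$, yet $uv$ is not even an arc of $D'$, so $A'_i=\{ab\}\neq S'_i\times T'_i$. The fix is to require in the definition of $S'_i$ that \emph{every vertex of the witnessing walk other than $u$} lies in $Z$ (so $a\in Z$ unless $a=u$), and symmetrically for $T'_i$; the forward inclusion still goes through because on a witnessing $Z$-walk the occurrence of $a$ is an internal vertex unless it equals $u$.

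For comparison, the paper avoids naming $S'_i$ and $T'_i$ at all: it verifies the exchange property that $uv, u'v' \in A'_i$ implies $uv' \in A'_i$ (which characterizes sets of the form $S\times T$), by splicing the prefix of the $Z$-walk for $uv$ up to $a\in S_i$ with the arc $ab'\in A_i$ and the suffix of the $Z$-walk for $u'v'$ from $b'\in T_i$. There the memberships $a\in Z\cup\{u\}$ and $b'\in Z\cup\{v'\}$ are automatic because $a$ and $b'$ come from actual witnessing $Z$-walks, which is precisely the point your version has to (and fails to) build into the definitions.
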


\begin{proof}
It is enough to show that if $uv, u'v' \in A'_i$, then $uv' \in A'_i$.
By definition, there exists a $Z$-walk $W$ (resp. $W'$) from $u$ to $v$ 
(resp. $u'$ to $v'$), with possibly no internal vertices, which goes through a 
terminal arc $ab \in A_i$ (resp. $a'b' \in A_i$), where the terminal arc may be a loop.
As $A_i = S_i \times T_i$, we have $ab' \in A_i$, and so by combining
a prefix of $W$ with a suffix of $W'$, there is a $Z$-walk
from $u$ to $v'$ containing an arc in $A_i$. This shows that $uv' \in A'_i$.
\end{proof}

\begin{claim}
$I/Z$ is an instance of \textsc{Arc Terminal Symmetric Multiway Cut Compression}.
\end{claim}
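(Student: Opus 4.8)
The plan is to verify that the tuple $I/Z=(D',(A'_1,\dots,A'_\ell),k,Y)$ satisfies every syntactic and structural requirement of an \textsc{Arc Terminal Symmetric Multiway Cut Compression} instance. There are essentially four things to check: (i) each $A'_i$ is a ``near-biclique'' of the form $S'_i \times T'_i$; (ii) $Y$ is a subset of $V(D')$; (iii) no closed walk in $D'-Y$ intersects two distinct sets $A'_i$, $A'_j$; and (iv) $y_iy_j \in E(D')$ for all $1 \le i < j \le r$. Point (i) is exactly Claim~\ref{claim:biclique}, which we may cite. Point (ii) holds because we arranged $Z$ to be disjoint from $Y$, so $Y \subseteq V(D) \setminus Z = V(D')$. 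Point (iv) is immediate: every arc $y_iy_j$ of $D$ is in particular a $Z$-walk with no internal vertices, hence survives into $E(D')$.

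The only substantive point is (iii), and this is where I expect the main (though still modest) work to lie. Suppose for contradiction that there is a closed walk $W'$ in $D'-Y$ meeting both $A'_i$ and $A'_j$ with $i \neq j$. Each arc of $W'$ is, by construction of $E(D')$, witnessed by a $Z$-walk in $D$ between its endpoints; and if the arc lies in $A'_i$ we may take that witnessing $Z$-walk to pass through a terminal arc of $A_i$. Concatenating these $Z$-walks around $W'$ produces a closed walk $W$ in $D$ whose internal vertices lie in $Z \cup (V(D') \setminus Y) = V(D) \setminus Y$, i.e.\ $W$ is a closed walk in $D-Y$, and $W$ meets both $A_i$ and $A_j$. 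This contradicts the first defining property of the compression instance $I$. (I should double-check the degenerate case where $W'$ is a single loop $uu$ at a vertex $u$ that simultaneously lies in some $A'_i$ and some $A'_j$; this unfolds to a closed $Z$-walk through $u$ meeting both $A_i$ and $A_j$, still a closed walk in $D-Y$, so the same contradiction applies. The rejection step ``if $D[Z]$ contains a closed walk meeting two distinct $A_i,A_j$, reject $Z$'' handles the sub-case where the offending closed walk lives entirely inside $Z$.)

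Finally I would note that the parameter bound is unchanged — the parameter is still $k+r$ with the same $k$ and the same $Y$ — and that $D'$ may legitimately have loops, which the problem statement explicitly permits, so nothing in the construction violates the input format. Assembling these observations completes the claim. The construction is entirely routine; the one place where care is genuinely needed is making sure the concatenation of witnessing $Z$-walks in the proof of (iii) stays inside $D-Y$, which is exactly why we insisted on choosing $Z$ disjoint from $Y$ at the start of Section~\ref{subsec:reduce_shadow}.
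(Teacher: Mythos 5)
Your proposal is correct and follows essentially the same route as the paper: cite Claim~\ref{claim:biclique} for the near-biclique structure, observe that the arcs $y_iy_j$ survive as trivial $Z$-walks, and verify that $Y$ remains a solution by expanding a hypothetical bad closed walk in $D'-Y$ into a closed walk in $D-Y$ via the witnessing $Z$-walks, using $Y\cap Z=\emptyset$. The extra remarks about loops and the parameter are harmless additions; no gap.
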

\begin{proof}
By Claim~\ref{claim:biclique}, $A_i'=S_i' \times T_i'$ for every $i$,
and the arcs $y_iy_j$, $i<j$ remain in $D'$.
It remains to check that $Y$ is a solution for $D'$. Assume to the contrary,
and let $W$ be a closed walk in $D'-Y$ intersecting two sets $A_i$ and $A_j$, $i \neq j$. 
But then $W$ expands into a closed walk $W'$ in $D$ by replacing every arc of $W$ with
a corresponding $Z$-walk. Since $Y \cap Z = \emptyset$, this is a closed walk in $D$
intersecting $A_i$ and $A_j$, disjoint from $Y$. This is a contradiction. 
\end{proof}

\begin{claim}
If $I$ has a solution disjoint from $Z$ and with shadow contained in $Z$,
then $I/Z$ has a shadowless solution.
\end{claim}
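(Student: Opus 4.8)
The plan is to show that if $X$ is a solution for $I$ that is disjoint from $Z$ and whose shadow is contained in $Z$, then the very same set $X$ is a shadowless solution for $I/Z$. Since $X \cap Z = \emptyset$ we have $X \subseteq V(D')$, and since $Z$ is disjoint from $Y$ we still have $X \cap Y = \emptyset$. So the three tasks are: (i) verify $X$ is a valid solution for $I/Z$, i.e.\ that no closed walk in $D'-X$ intersects two distinct $A'_i$ and that there is no $(y_j,y_i)$-path in $D'-X$ for $j>i$; and (ii) verify that $X$ is \emph{shadowless} in $D'$, i.e.\ every vertex of $V(D')\setminus(Y\cup X)$ both reaches $Y$ and is reachable from $Y$ in $D'-X$.

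For task (i), the key is the correspondence between walks in $D'$ and walks in $D$. A walk $W'$ in $D'-X$ expands into a walk $W$ in $D$ by replacing each arc $uv$ of $W'$ with a corresponding $Z$-walk; since the internal vertices of each $Z$-walk lie in $Z$ and $X \cap Z = \emptyset$, the expanded walk $W$ avoids $X$, and it has the same endpoints as $W'$. Moreover if some arc of $W'$ lies in $A'_i$, then by definition the corresponding $Z$-walk passes through an arc of $A_i$, so $W$ intersects $A_i$. Hence a closed walk in $D'-X$ meeting two distinct $A'_i, A'_j$ would expand to a closed walk in $D-X$ meeting $A_i$ and $A_j$ — contradicting that $X$ is a solution for $I$. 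The same expansion argument rules out a $(y_j,y_i)$-walk in $D'-X$ for $j>i$. Conversely we should note (for the reverse direction of the equivalence, handled in a later claim) that a $(u,v)$-walk in $D-X$ whose endpoints and all vertices outside $Z$ lie in $V(D')$ contracts to a $(u,v)$-walk in $D'-X$; but for the present claim only the expansion direction is needed.

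For task (ii), fix $v \in V(D')\setminus(Y\cup X)$. Since $X$ is not shadowless in $D$ only on vertices of $Z$, and $v \notin Z$, vertex $v$ is \emph{not} in the shadow of $X$ in $D$; that is, in $D-X$ there is a path $P$ from $v$ to some $y \in Y$ and a path $Q$ from some $y' \in Y$ to $v$. Consider the path $P$ from $v$ to $Y$: let $y$ be its first vertex in $Y$ (so all earlier vertices are outside $Y$) — actually we want the last $D'$-vertex structure; the clean way is to walk along $P$ and contract every maximal subpath whose internal vertices all lie in $Z$ into a single arc of $D'$. Each such contracted arc is a genuine arc of $D'$ because its two endpoints lie in $V(D')$ and its internal vertices lie in $Z$, i.e.\ it is a $Z$-walk. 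This yields a walk in $D'-X$ from $v$ to $y \in Y$. Symmetrically $Q$ contracts to a walk in $D'-X$ from $y' \in Y$ to $v$. Hence $v$ reaches $Y$ and is reached from $Y$ in $D'-X$, so $v$ is not in the shadow of $X$ in $D'$. As $v$ was arbitrary, $X$ is shadowless for $I/Z$.

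\textbf{Main obstacle.} The only delicate point is the walk-contraction step: one must be sure that contracting a maximal run of $Z$-vertices on a path of $D-X$ always produces a legal arc of $D'$ and never accidentally lands on a vertex of $Z$ or re-enters $X$. This is immediate from the definitions ($Z \cap X = \emptyset$, and $E(D')$ contains exactly the arcs realised by $Z$-walks), but it is the place where the argument would break if, say, $Z$ and $Y$ were not made disjoint — which is exactly why the preprocessing replaced $Z$ by $Z \setminus Y$. A secondary bookkeeping point is that after the reduction the list $A'_1,\dots,A'_\ell$ may contain arcs that are also in $D$, and indeed $A_i \cap E(D') \subseteq A'_i$; this inclusion is what guarantees the expansion argument respects terminal arcs that happen to survive in $D'$, so no closed walk is wrongly certified as harmless.
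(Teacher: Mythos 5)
Your proposal is correct and follows essentially the same route as the paper: take the same set $X$, expand walks of $D'-X$ into walks of $D-X$ to verify that $X$ remains a solution, and contract the $(u,Y)$- and $(Y,u)$-paths witnessing non-shadow membership in $D$ into walks of $D'-X$ to verify shadowlessness. No gaps.
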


\begin{proof}
Let $X$ be a solution of $I$ disjoint from $Z$ and with shadow contained in $Z$.
We claim that $X$ is a shadowless solution of $I/Z$.

First, let's see why $X$ is a solution of $I/Z$.
Suppose for contradiction that $D'-X$ contains a closed walk $W'$
containing two terminal arcs $uv \in A'_i$ and $u'v' \in A'_j$ for some
distinct indices $i$ and $j$.
Then we construct a closed walk $W$ in $D-X$ intersecting both $A_i$ and $A_j$:
replace in $W'$ the arc $uv$ (resp. $u'v'$) by a $Z$-walk from $u$ to $v$ 
(resp. $u'$ to $v'$) intersecting $A_i$ (resp. $A_j$), 
and for every other arc $xy \in W'$ which is not in $D$, replace $xy$
by a $Z$-walk from $x$ to $y$.
This gives a closed walk $W$ in $D-X$ intersecting both $A_i$ and $A_j$,
contradicting the fact that $X$ is a solution of $I$.
Similarly, if there is a $(y_j,y_i)$-path $P'$ in $D'-X$ for some $j>i$,
then we can expand $P'$ into a $(y_j,y_i)$-walk $W$ in $D-X$,
which can be shortcut into a $(y_j,y_i)$-path $P$ in $D-X$.

Now we show that $X$ is shadowless in $I'$.
For every vertex $u \in V(D) \setminus Z$, we know that there is a $(u,Y)$-path
$P^+$ (resp. $(Y,u)$-path $P^-$) in $D-X$, as the shadow of $X$ is included in $Z$.
Then we replace every $Z$-walk in $P^+$ (resp. $P^-$) by the arc linking its endpoints.
This gives a $(u,Y)$-path (resp. $(Y,u)$-path) in $D'-X$, and so $v$ is not in the
shadow.
This proves that $X$ is shadowless in $D'$.
\end{proof}

\begin{claim}
If $I/Z$ has a solution then $I$ too.
\end{claim}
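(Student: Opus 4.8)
The plan is to take a solution $X'$ of $I/Z$ and show it is also a solution of the original instance $I$. Since $V(D') = V(D) \setminus Z$ and $X' \subseteq V(D')$, the set $X'$ is a valid candidate for $I$ (it is disjoint from $Y$ because $X'$ is disjoint from $Y$ in $I/Z$, and $Z$ was made disjoint from $Y$). We must verify the two substantive conditions: (a) every closed walk in $D - X'$ intersects at most one $A_i$, and (b) there is no $(y_j, y_i)$-path in $D - X'$ for $j > i$. The guiding principle is that any walk in $D - X'$ can be \emph{projected} onto a walk in $D' - X'$ by contracting each maximal subwalk whose internal vertices lie in $Z$ into the corresponding arc of $D'$, and this projection preserves which sets $A_i$ are intersected and preserves the endpoints.

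For condition (a), suppose for contradiction that $W$ is a closed walk in $D - X'$ hitting two distinct terminal arcs, one in $A_i$ and one in $A_j$. I would first split $W$ at the vertices that lie in $V(D') \setminus X' = V(D) \setminus (Z \cup X')$; each maximal segment between two consecutive such vertices is a $Z$-walk, and hence corresponds to an arc of $D'$ (present in $D' - X'$ since its internal vertices are in $Z$, which is disjoint from $X'$). Here I must be careful about the edge case where $W$ lies entirely inside $Z$: but such a walk hitting two $A_i, A_j$ was explicitly forbidden by the construction ("If $D[Z]$ contains a closed walk $W$ such that at least two $A_i, A_j$ intersect $W$, reject $Z$"), so this case cannot arise. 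Otherwise $W$ passes through at least one vertex of $V(D')$, and contracting its $Z$-walk segments yields a closed walk $W'$ in $D' - X'$. If the arc of $A_i$ hit by $W$ lies on a contracted segment, the corresponding arc of $W'$ lies in $A'_i$ by the definition of $A'_i$; if it is already an arc between two vertices of $V(D')$, then it lies in $A_i \cap E(D') \subseteq A'_i$. The same applies to $A_j$. So $W'$ is a closed walk in $D' - X'$ hitting both $A'_i$ and $A'_j$, contradicting that $X'$ solves $I/Z$.

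For condition (b), suppose there is a $(y_j, y_i)$-path $P$ in $D - X'$ with $j > i$. Since $y_i, y_j \in Y$ and $Y$ is disjoint from $Z$, both endpoints lie in $V(D')$. Applying the same projection — contracting each maximal $Z$-walk segment of $P$ into the corresponding arc of $D'$ — produces a $(y_j, y_i)$-walk in $D' - X'$, which shortcuts to a $(y_j, y_i)$-path in $D' - X'$, contradicting that $X'$ solves $I/Z$. Hence both conditions hold and $X'$ is a solution of $I$. The only real subtlety is the handling of walks that do not leave $Z$, and that is precisely ruled out by the rejection step in the construction of $I/Z$; beyond that the argument is the routine "expand $Z$-walks back into $D$" bookkeeping that mirrors the previous claim.
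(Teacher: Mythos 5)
Your proof is correct and follows the same route as the paper: take a solution of $I/Z$ and project any offending closed walk or $(y_j,y_i)$-path in $D-X'$ down to $D'-X'$ by contracting maximal $Z$-segments into the corresponding arcs of $D'$. Your explicit handling of the edge case where a closed walk lies entirely within $Z$ (ruled out by the rejection step) is a detail the paper's proof leaves implicit, but otherwise the arguments coincide.
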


\begin{proof}
Suppose that $I/Z$ has a solution $X$.
We claim that $X$ is a solution for $I$ too.

Suppose for contradiction that $D-X$ has a closed walk $W$ intersecting
both $A_i$ and $A_j$ for some distinct indices $i$ and $j$.
Then construct the closed walk $W'$ in $D'-X$ as follows:
replace every $Z$-walk in $W$ by the arc linking its endpoints.
This creates a closed walk $W'$ in $D'-X$ intersecting both $A'_i$ and $A'_j$,
contradicting the fact that $X$ is a solution for $I'$.
A similar step applies if $D-X$ contains a $(y_j,y_i)$-path for some $j>i$.
\end{proof}

As a consequence, we are able to transform the original instance $I$ into an
equivalent instance $I/Z$ which has a shadowless solution.
Guessing $Z$ multiplies the running time by at most $2^{\bigO(k^2)}\log^2 n$,
and then computing $I/Z$ is performed in polynomial time.

\subsection{Finding a shadowless solution}

We now suppose that $I=(D,(A_1, \dots A_\ell),k,Y)$ has a shadowless solution
$X_{OPT}$. Remember that $y_1, \dots, y_r$ is an ordering of $Y$
such that there is no $(y_j,y_i)$-path in $D-X_{OPT}$ if $j>i$,
and  for every $j>i$, $y_i y_j$ is an arc in $D$.
As the solution $X_{OPT}$ we are searching for is shadowless, 
every vertex in $D-X_{OPT}$ reaches $Y$, and so $y_r$ (because
$y_r$ is dominated by $Y \setminus \{y_r\}$).

Another observation is that for at most one index ${i_0}$, $A_{i_0}$
contains a terminal arc strongly connected with $y_r$ in $D-X_{OPT}$.
In what follows, we implicitly suppose that $i_0$ exists, otherwise
we can set by convention $A_{i_0} = \emptyset$.
As $X_{OPT}$ is shadowless, an arc $uv$ is strongly
connected with $y_r$ in $D-X_{OPT}$ if and only if
\begin{enumerate}
    \item $y_r$ reaches $u$ in $D-X_{OPT}$ and
    \item $v \not\in X_{OPT}$.
\end{enumerate}

The next claim allows us to find the set of vertices $v$ which violates
the second condition. Let $R$ denote the set of vertices reachable from 
$y_r$ in $D-X_{OPT}$ and note by shadowlessness that $R$ precisely
describes the strongly connected component of $y_r$ in $D-X_{OPT}$.
Say that $A_i$ is \emph{active in $X_{OPT}$} if $i \neq i_0$ and 
$S_i \cap R \neq \emptyset$ (and note that this implies $T_i \subseteq X_{OPT}$).

\begin{claim}[Derived from Theorem~5.4 \cite{ChitnisCHM15dsfvs}]
\label{claim:compute_Tc}
One can find in time $2^{\bigO(k)}n^{\bigO(1)}$
a collection of pairs $(I, T_c)$ where $I \subseteq [\ell]$
and $T_c \subseteq V(D)$, such that the following hold:
\begin{enumerate}
\item the number of pairs $(I, T_c)$ produced is $k^{O(1)} \log n$
\item for every pair, $|I|+|T_c| \leq (2k+1)4^{2k+1}$
\item for at least one pair $(I,T_c)$ we
have $i_0 \in I$ if $A_{i_0}\neq \emptyset$, and for every $i \in [\ell]$
such that $A_i$ is active in $X_{OPT}$ we have $T_i \subseteq T_c$
\end{enumerate}
\end{claim}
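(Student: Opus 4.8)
The plan is to mimic the branching argument behind Theorem~5.4 of Chitnis et al.~\cite{ChitnisCHM15dsfvs}, which in their setting produces a bounded family of candidate vertex sets $T_c$ such that for at least one of them $T_c$ contains all the ``heads'' of the terminal arcs that get absorbed into the reachability component of $y_r$. The obstacle, relative to their setting, is that in \textsc{Directed Subset Feedback Arc Set} the component $R$ of $y_r$ contains \emph{no} terminal arc, whereas here it may contain all the arcs of one set $A_{i_0}$, and $A_{i_0}$ may be unbounded. So the first step is a color-coding / guessing step to isolate $i_0$: guess $i_0$ (or that no such index exists), and then \emph{contract} or mark the arcs of $A_{i_0}$ so that they are no longer treated as terminal arcs obstructing $R$. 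After this, every other $A_i$ that is active in $X_{OPT}$ satisfies $T_i\subseteq X_{OPT}$, so the heads of those arcs behave exactly like the forbidden terminal-arc heads in the \textsc{DSFAS} analysis.

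Next I would set up the auxiliary graph and the important-separator branching as in \cite{ChitnisCHM15dsfvs}. Since $X_{OPT}$ is shadowless, $R$ is precisely the strongly connected component of $y_r$, so $R$ is separated from the rest of $D-X_{OPT}$ by an $(R,\cdot)$-cut of size at most $k$ that is ``pushed'' towards $y_r$; by the standard important-separator argument one may assume this boundary cut is important. One then applies the \emph{anti-isolation lemma} (or the random-sampling/derandomized variant used in Chitnis et al.) to guess, for each active $A_i$, a single ``representative'' vertex of $T_i$ lying on the boundary, and collects these into $T_c$, together with the index set $I$ of the guessed $i_0$ and of the active indices whose representatives we committed to. The bound on the number of produced pairs comes from: a factor $k^{O(1)}\log n$ from the color-coding/sampling step (this is where the $\log n$ enters), times $4^{O(k)}$ from the important-separator branching; the size bound $|I|+|T_c|\le (2k+1)4^{2k+1}$ follows because there are at most $4^{2k+1}$ relevant important cuts, each of size at most $2k+1$, and each contributes $O(1)$ representatives.

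The correctness step is to verify that for the correct guesses we indeed get $i_0\in I$ and $T_i\subseteq T_c$ for every active $A_i$. For $i_0$ this is immediate from the guess. For an active $A_i$ with $i\neq i_0$: since $S_i\cap R\neq\emptyset$ and $X_{OPT}$ is a solution, no closed walk through an $A_i$-arc and an $A_{i_0}$-arc (or any $A_j$-arc, $j\neq i$) survives, which forces $T_i\subseteq X_{OPT}$ — otherwise some head $v\in T_i\setminus X_{OPT}$ would be reachable from $y_r$ (via $S_i\cap R$) and, being in $R$, strongly connected to $y_r$, creating a closed walk using the arc of $A_i$ together with an arc of $A_{i_0}$, contradiction (or, if $A_{i_0}=\emptyset$, contradicting that $y_r$ is not strongly connected to any terminal arc). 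Then the argument of \cite{ChitnisCHM15dsfvs} shows each such $T_i$ lies ``just behind'' the important cut bounding $R$, hence is captured by the representative set $T_c$ for the right branch. The main obstacle I anticipate is the bookkeeping in this last containment argument — tracking how the ``near-biclique'' structure $A_i=S_i\times T_i$ interacts with the important cut so that a \emph{single} representative per active index suffices — but this is exactly the place where the biclique closure property (Claim~\ref{claim:biclique}) and the shadowlessness hypothesis do the work, so it should go through with the same mechanics as the original.
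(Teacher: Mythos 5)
The high-level skeleton you describe (handle $i_0$ by color-coding, build an auxiliary graph, use important separators, output a small family of candidates) matches the paper, but the step you yourself flag as the anticipated obstacle is exactly where the proposal breaks. Item~3 of the claim requires the \emph{entire} set $T_i$ to land in $T_c$ for every active $A_i$, and the downstream algorithm genuinely needs this: it guesses $X_{OPT}\cap T_c$, deletes it, and then assumes no $A_i$ with $i\neq i_0$ is active --- if $T_c$ held only one representative of $T_i$, the surviving heads of $A_i$ would keep it active. So your plan of putting ``a single representative vertex of $T_i$'' into $T_c$ cannot establish the claim, and no appeal to the biclique property closes this gap by itself. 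The paper's actual mechanism is an auxiliary graph in which each near-biclique $A_i$ is funneled through a fresh bottleneck vertex $z_i$, with in-arcs from $\{u^+ : u\in S_i\}$ and out-arcs to $\{v^- : v\in T_i\}$, where every $v^-$ points directly to the sink $t$. For an active $A_i$ the vertex $z_i$ is reachable from the source, and an important $(s,t)$-cut that does not contain $z_i$ has nowhere to push except onto all of $\{v^- : v\in T_i\}$; that is how the whole of $T_i$ is captured. Moreover, $T_c'$ is taken to be the \emph{union} of all important $(s,t)$-cuts of size at most $2k+1$ --- one candidate set per splitter member, no branching --- which is what yields $k^{O(1)}\log n$ pairs each of size at most $(2k+1)4^{2k+1}$, rather than the $4^{O(k)}\cdot k^{O(1)}\log n$ pairs your accounting produces.

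Your use of the color-coding step is also different from, and weaker than, the paper's. You branch on $i_0$ directly over all $\ell+1$ possibilities, which inflates the number of pairs by a factor of $\ell$ (not bounded by $k^{O(1)}\log n$, since $\ell$ can be as large as $n$), and your contraction of $A_{i_0}$'s arcs plays no role afterwards. The paper instead uses an $(n,k,k^2)$-splitter to find a set $U$ containing a fixed vertex $v\in T_{i_0}\cap R$ while avoiding $X_{OPT}$, gives the copies $v^-$ for $v\in U$ capacity $2k+2$ in the auxiliary graph, and thereby forces $z_{i_0}$ itself into the important cut (pushing the cut past $z_{i_0}$ would have to pay for a high-capacity vertex and exceed the budget $2k+1$). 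The index set $I=\{i : z_i\in T_c'\}$ then automatically contains $i_0$ and has size at most $(2k+1)4^{2k+1}$, which is what makes the later guess of $i_0\in I$ affordable. Branching on $i_0$ up front is salvageable for the overall FPT algorithm (it only costs a polynomial factor), but it does not prove the claim as stated, and it leaves the $T_i\subseteq T_c$ requirement unaddressed.
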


\begin{proof}
Assume that $A_{i_0} \neq \emptyset$ as otherwise the result is easier,
and let $uv \in A_{i_0}$ with $u, v \in R$. We begin by computing a
subset $U \subseteq V(D)$ such that $v \in U$ and $U \cap X_{OPT}=\emptyset$.
This can be done randomly with success probability $\Theta(1/k)$
by sampling every vertex independently with probability $1/k$, 
but the process can also be derandomized by a \emph{$(n,k,k^2)$-splitter};
see Naor et al.~\cite{NaorSS95}. In particular, in polynomial time
we can compute a family of subsets $U_i \subseteq V(D)$
such that the family contains $k^{\bigO(1)}\log n$ members
and at least one member meets the conditions for $U$.
We repeat the steps below for every member $U_i$ in the family.

From now on, let us assume that we have such a set $U$.
Create a graph $D'$ as follows. For every $v \in V(D)$,
create two vertices $v^-$, $v^+$. For every $i \in [\ell]$,
create a vertex $z_i$ and add the arcs $\{u^+z_i \mid u \in S_i\}$
and $\{z_iv^- \mid v \in T_i\}$. For every arc $uv \in E(D)$, 
add the arc $u^+v^+$. Finally, add vertices $s$ and $t$,
the arc $sy_r^+$, and the arc $v^-t$ for every $v \in V(D)$. 
Finally, for every vertex $v \in U$ give $v^-$ capacity $2k+2$
by replacing $v^-$ by a set of $2k+2$ false twins. 
Let $T_c'$ be the union of all important $(s,t)$-cuts in $D'$
of size at most $2k+1$. 
By Property~\ref{prop:enum_important_cut}, $T'_c$ can be computed in time $2^{\bigO(k)}n^{\bigO(1)}$ 
and $|T'_c| \leq (2k+1)4^{2k+1}$.
Finally we set $I=\{i \mid z_i \in T_c'\}$ and $T_c=\{v \in V(D) \mid v^- \in T_c'\}$. 
Clearly $|I|+|T_c| \leq |T'_c| \leq (2k+1)4^{2k+1}$.

We claim that $I$ contains $i_0$, and that for every $A_i$ that is 
active in $X_{OPT}$ we have $T_i \subseteq T_c$.
Indeed, define the set $X'=\{v^-, v^+ \mid v \in X_{OPT}\} \cup \{z_{i_0}\}$
and recall by assumption that $X_{OPT} \cap U = \emptyset$.
Note that $X'$ is an $(s,t)$-cut. Indeed, assume to the contrary
that there is an $(s,t)$-path $P$ in $D'-X'$. Then the last arcs
of $P$ must be $u^+z_i$, $z_iv^-$ and $v^-t$ for some $i \in [\ell]$, $uv \in A_i$. 
We may also assume that the entire prefix of $P$ before $z_i$ 
visits only $s$ and vertices $w^+$, $w \in V(D)$. 
But then that prefix proves $u \in R$; $z_i \notin X'$ implies $i \neq i_0$;
and $v^- \notin X'$ implies $v \notin X_{OPT}$. This contradicts
that only $A_{i_0}$ is strongly connected to $y_r$ in $D-X_{OPT}$. 
Also note $|X'| \leq 2k+1$. Now by Property~\ref{prop:ext_important_cut}
we can push $X'$ to an important $(s,t)$-cut $X''$ of size at most $2k+1$,
hence $X'' \subseteq T_c'$. 

We claim that $z_{i_0} \in X''$ and for every $A_i$ active in $X_{OPT}$
we have $\{v^- \mid v \in T_i\} \subseteq X''$.
For the former, by assumption $u \in R$, hence either $z_{i_0} \in X''$
or the cut has been pushed closer to $t$. But since $v \in U$
and $v$ has been given high capacity, pushing the cut past $z_{i_0}$
would contradict the size bound of $2k+1$. Hence $z_{i_0} \in X''$.
For the latter, assume that $A_i$ is active in $X_{OPT}$.
Then there is a vertex $u' \in S_i \cap R$, hence $z_i \in R$, 
and the cut cannot push past the vertices $v^-$, $v \in T_i$
since $v^-t \in E(D')$. 
%
%
\end{proof}

Now we can guess the correct pair $(I,T_c)$.
Therefore, we can guess $i_0 \in I$ (or the case that $A_{i_0} = \emptyset$) and $X_{OPT} \cap T_c$, and remove these vertices from $D$.
This multiplies the running time by at most $(2k+1)4^{2k+1} \binom{(2k+1)4^{2k+1}}{k}\log n = 2^{\bigO(k^2)}\log n$, and now we can assume that
for every $i \in [\ell]$ except $i_0$, $A_i$ is not active.
Furthermore, if $A_{i_0} \neq \emptyset$ then we add all arcs $\{y_r\} \times T_{i_0}$ to the graph.
Next claim shows how to start the construction of a solution using these assumptions.

\begin{claim}
Adding the arcs $\{y_r\} \times T_{i_0}$ does not affect the solution.
Furthermore, let $S$ be the set of vertices in $X_{OPT}$ which have an
in-neighbour reachable from $y_r$ in $D-X_{OPT}$.
There exists an important $(\{y_r\}, Y \setminus \{y_r\} \cup \bigcup_{i \neq i_0} S_i)$-cut $S'$ of size at most $|S|$
such that $(X_{OPT} \setminus S) \cup S'$ is a solution to $I$. 
\end{claim}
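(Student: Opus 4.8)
The statement has two halves. The first --- that adjoining the arcs $\{y_r\}\times T_{i_0}$ changes nothing --- I would prove by showing these arcs are redundant. One direction is trivial: since we only add arcs, every solution of the augmented instance solves the original. For the converse it suffices to check that $X_{OPT}$ still works after the addition. By the definition of $i_0$ and shadowlessness, $A_{i_0}$ contains a terminal arc $uv$ with $u,v\in R$, so $u\in S_{i_0}\cap R$; since $A_{i_0}=S_{i_0}\times T_{i_0}$, every arc $uv'$ with $v'\in T_{i_0}$ already lies in $D$, and hence every $v'\in T_{i_0}\setminus X_{OPT}$ is already reachable from $y_r$ in $D-X_{OPT}$ through $u$. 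Thus any walk using a new arc $y_r v'$ can be rerouted along a walk $y_r\to\cdots\to u\to v'$, keeping its endpoints; the only terminal arc this can introduce is $uv'\in A_{i_0}$. So if the augmented graph minus $X_{OPT}$ contained a closed walk meeting two distinct $A_m,A_{m'}$, or a backward $(y_j,y_i)$-path, then rerouting its new arcs would give such a configuration already in $D-X_{OPT}$ (in the closed-walk case it still meets $A_m$ and $A_{m'}$, as rerouting only adds terminal arcs), contradicting that $X_{OPT}$ is a solution.

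For the second half, the key observation is that $S$ is exactly the frontier $N^+_D(R)\setminus R$ of $R$: any out-neighbour of $R$ not lying in $R$ is in $X_{OPT}$ (otherwise it would be reachable from $y_r$), and these are precisely the vertices of $S$. Hence the set of vertices reachable from $y_r$ in $D-S$ is still exactly $R$. Writing $B:=(Y\setminus\{y_r\})\cup\bigcup_{i\ne i_0}S_i$, one checks $R\cap B=\emptyset$: no $y_i$ with $i<r$ lies in $R$ since $D-X_{OPT}$ has no $(y_r,y_i)$-path, and $S_i\cap R=\emptyset$ for every $i\ne i_0$ since after the guessing of Claim~\ref{claim:compute_Tc} no such $A_i$ is active. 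Therefore $S$ separates $y_r$ from $B$, so Property~\ref{prop:ext_important_cut} yields an important $(\{y_r\},B)$-cut $S'$ with $|S'|\le|S|$ such that the set $R'$ of vertices reachable from $y_r$ in $D-S'$ contains $R$; separating $y_r$ from $B$ also forces $R'\cap B=\emptyset$. Put $X':=(X_{OPT}\setminus S)\cup S'$; then $|X'|\le|X_{OPT}|\le k$ and $X'\cap Y=\emptyset$ (as $S'$ avoids $y_r$ and $Y\setminus\{y_r\}\subseteq B$).

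It remains to verify that $X'$ is a solution, which is the crux. Since $X_{OPT}\setminus X'=S\setminus S'$, any closed walk meeting two distinct $A_i$, or any $(y_j,y_i)$-path with $j>i$, that lives in $D-X'$ but not in $D-X_{OPT}$ must pass through some $v\in S\setminus S'$. As $v$ has an in-neighbour $w\in R\subseteq R'$ and $v\notin S'$, we get $v\in R'$; and since $D-X'\subseteq D-S'$, everything reachable from $v$ in $D-X'$ lies in $R'$. A closed walk through $v$ is then contained in $R'$, so it cannot contain an arc of any $A_i$ with $i\ne i_0$ (such an arc would have its tail in $S_i\cap R'\subseteq B\cap R'=\emptyset$), hence it meets at most the single set $A_{i_0}$ --- contradicting that it meets two distinct $A_i$. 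Likewise, a $(y_j,y_i)$-path through $v$ with $j>i$ puts $y_i\in R'$ via its portion from $v$ onward, contradicting $y_i\in B$ (the case $j=r$ is immediate, since the path itself puts $y_i\in R'$). The step I expect to be the main obstacle is exactly this last bookkeeping --- keeping straight which vertices of $X_{OPT}$ survive and which get swapped out, and arguing that replacing the frontier of $R$ by a maximally pushed important cut cannot grow the strongly connected component of $y_r$ past $B$, which is precisely why $B$ is taken to absorb all $S_i$ with $i\ne i_0$; this parallels Claim~\ref{claim:pushing_approx} and the analogous step of Chitnis et al.~\cite{ChitnisCHM15dsfvs}. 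One subtlety to handle is that $X_{OPT}$ may meet some $S_i$, so $S$ need not be a $(\{y_r\},B)$-cut in the textbook sense; the fix is to use the generalized notion of an important cut towards the ``super-sink'' $B$, in which vertices of $B$ are themselves allowed to be deleted.
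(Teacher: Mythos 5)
Your proof is correct and follows essentially the same route as the paper: the first half via $T_{i_0}\subseteq R\cup X_{OPT}$, and the second half by observing that $S$ is the frontier of $R$, pushing it to an important cut $S'$ via Property~\ref{prop:ext_important_cut}, and arguing that any surviving bad walk or backward path must pass through $S\setminus S'\subseteq R'$ and hence drag a vertex of $Y\setminus\{y_r\}\cup\bigcup_{i\neq i_0}S_i$ into $R'$, a contradiction. Your uniform ``the whole closed walk lies in $R'$'' argument even subsumes the paper's separate treatment of the index $i_0$, and you correctly flag the deletable-sink technicality that the paper glosses over (just make sure that in the generalized cut the vertices of $Y$ themselves stay undeletable, so that $X'\cap Y=\emptyset$ as you assert).
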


\begin{proof}
We first note that since $R \cap S_{i_0} \neq \emptyset$, then for every $v \in T_{i_0}$ either $v \in R$ or $v \in X_{OPT}$ 
(for example due to blocking paths from $y_r$ to some $y_i$, $i<r$).
Hence adding the arcs $\{y_r\} \times T_{i_0}$ has no effect on the solution. However,
it does simplify the important separator step below.

Now observe that $S$ is a $(\{y_r\}, Y \setminus \{y_r\} \cup \bigcup_{i \neq i_0} S_i)$-cut.
By Property~\ref{prop:ext_important_cut}, there exists an important $(\{y_r\}, Y \setminus \{y_r\} \cup \bigcup_{i \neq i_0} S_i)$-cut $S'$ with $|S'| \leq |S|$ such that
every vertex reachable from $y_r$ in $D-S$ is still
reachable from $y_r$ in $D-S'$.
We prove that $X' := (X_{OPT} \setminus S) \cup S'$ is a solution for $I$.
Clearly $|X| \leq k$, so we only need to show that $X'$ cuts all the 
closed walks intersecting several of the sets $A_1, \dots, A_\ell$
and all $(y_j,y_i)$-paths, $j>i$.

Suppose for contradiction that there exists two distinct indices $i \neq j$
and a closed walk $W$ such that $W$ intersects both $A_i$ and $A_j$.
First, $i\neq i_0$ and $j \neq i_0$: since the arc $y_rv$ is added for every $v \in T_{i_0}$,
either $v \in X_{OPT}$ or $v \in R$. Thus there is no path from $T_{i_0}$ to $S_i$ for any $i \neq i_0$
in $D-X'$ by the choice of the cut $S'$. 
Moreover, $W$ must intersect $S$, as otherwise $W$ is a closed walk in $D-X_{OPT}$,
contradicting the fact that $X_{OPT}$ is a solution.
Let $s$ be a vertex in $S \cap W$, then either $s \in S'$, and so $S'$ intersects $W$;
or $s$ is reachable from $y_r$ in $D-S'$.
But then $S_i$ is reachable from $y_r$ in $D-S'$, contradicting the fact that
$S'$ is an $(y_r, \bigcup_{i \neq i_0} S_i)$-cut.
This contradiction proves that $X'$ is a solution.
By a similar argument, $X'$ also cuts all $(y_j,y_i)$-paths for $j>i$. 
\end{proof}

Note that $(X_{OPT} \setminus S) \cup S'$ might have a non empty shadow.
This is not a problem as we will apply the shadow removal procedure at each step.

We can now give the algorithm $\mathcal{A}'$ on the instance $(D,(A_i),k,Y)$
of {\sc Arc Terminal Symmetric Directed Multiway Cut Compression}:
\begin{enumerate}
    \item reduce to the shadowless case by applying
        Subsection~\ref{subsec:reduce_shadow};
    \item compute (and guess) $(I,T_c)$ with Claim~\ref{claim:compute_Tc},
         guess $i_0 \in I \cup \{0\}$ and $X_c:=X_{OPT} \cap T_c \subseteq T_c$;
    \item let $D'=D-X_c$, and if $i_0 \neq 0$, add all arcs $\{y_r\} \times T_{i_0}$;
    \item guess an important $(\{y_r\}, Y \setminus \{y_r\} \cup \bigcup_{i \neq i_0} S_i)$-cut $S$
    of size at most $k-|X_c|$ in $D'$;
    \item if $\mathcal{A}'(D-S-X_c, (A_i), k-|S|-|X_c|, Y\setminus\{y_r\})$ returns
     a solution $X'$, return $S \cup X_c \cup X'$; otherwise proceed with the next guess
     or return ``no solution''.
\end{enumerate}

First, it is easy to see that if this algorithm returns a set $X$,
then $X$ is a solution of the input instance.
Moreover, by all the previous claims, if there exists a solution, then there
exists a sequence of guesses which will find it.
This algorithms explores a tree of depth at most $k$ with maximum degree
$2^{\bigO(k^2)}\log^3 n$, and each node is processed in time $2^{\bigO(k^2)}n^{\bigO(1)}$.
Hence the total running time is at most
\[
\left(2^{\bigO(k^2)} \log^3 n \right)^k 2^{\bigO(k^2)} n^{\bigO(1)}
= 2^{\bigO(k^3)} n^{\bigO(1)}
\]
using in particular Lemma~\ref{lem:logn_pow_k} from the appendix.
This completes the proof of Theorem~\ref{thm:multiway_is_fpt}.
\qed

\bibliographystyle{plainurl}
\bibliography{biblio}

\begin{thebibliography}{10}

\bibitem{AgarwalAC07cuts}
Amit Agarwal, Noga Alon, and Moses Charikar.
\newblock Improved approximation for directed cut problems.
\newblock In {\em {STOC}}, pages 671--680. {ACM}, 2007.

\bibitem{BousquetDT18SICOMP}
Nicolas Bousquet, Jean Daligault, and St{\'{e}}phan Thomass{\'{e}}.
\newblock Multicut is {FPT}.
\newblock {\em {SIAM} J. Comput.}, 47(1):166--207, 2018.

\bibitem{ChenLLOR08}
Jianer Chen, Yang Liu, Songjian Lu, Barry O'Sullivan, and Igor Razgon.
\newblock A fixed-parameter algorithm for the directed feedback vertex set
  problem.
\newblock {\em J. ACM}, 55(5), 2008.

\bibitem{CheriyanKR05dmc}
Joseph Cheriyan, Howard~J. Karloff, and Yuval Rabani.
\newblock Approximating directed multicuts.
\newblock {\em Comb.}, 25(3):251--269, 2005.

\bibitem{ChitnisCHM15dsfvs}
Rajesh~Hemant Chitnis, Marek Cygan, Mohammad~Taghi Hajiaghayi, and D{\'{a}}niel
  Marx.
\newblock Directed subset feedback vertex set is fixed-parameter tractable.
\newblock {\em {ACM} Trans. Algorithms}, 11(4):28:1--28:28, 2015.

\bibitem{ChitnisHM13dmc}
Rajesh~Hemant Chitnis, MohammadTaghi Hajiaghayi, and D{\'{a}}niel Marx.
\newblock Fixed-parameter tractability of directed multiway cut parameterized
  by the size of the cutset.
\newblock {\em {SIAM} J. Comput.}, 42(4):1674--1696, 2013.

\bibitem{ChuzhoyK09cuts}
Julia Chuzhoy and Sanjeev Khanna.
\newblock Polynomial flow-cut gaps and hardness of directed cut problems.
\newblock {\em J. {ACM}}, 56(2):6:1--6:28, 2009.

\bibitem{cygan2015parameterized}
Marek Cygan, Fedor~V Fomin, {\L}ukasz Kowalik, Daniel Lokshtanov, D{\'a}niel
  Marx, Marcin Pilipczuk, Micha{\l} Pilipczuk, and Saket Saurabh.
\newblock {\em Parameterized algorithms}, volume~5.
\newblock Springer, 2015.

\bibitem{EvenNRS00jacm}
Guy Even, Joseph Naor, Satish Rao, and Baruch Schieber.
\newblock Divide-and-conquer approximation algorithms via spreading metrics.
\newblock {\em J. {ACM}}, 47(4):585--616, 2000.

\bibitem{Gupta03}
Anupam Gupta.
\newblock Improved results for directed multicut.
\newblock In {\em {SODA}}, pages 454--455. {ACM/SIAM}, 2003.

\bibitem{KimKPW22stoc}
Eun~Jung Kim, Stefan Kratsch, Marcin Pilipczuk, and Magnus Wahlstr{\"{o}}m.
\newblock Directed flow-augmentation.
\newblock In {\em {STOC}}, pages 938--947. {ACM}, 2022.

\bibitem{KleinPRT97jalg}
Philip~N. Klein, Serge~A. Plotkin, Satish Rao, and {\'{E}}va Tardos.
\newblock Approximation algorithms for steiner and directed multicuts.
\newblock {\em J. Algorithms}, 22(2):241--269, 1997.

\bibitem{KratschPPW15mcdags}
Stefan Kratsch, Marcin Pilipczuk, Michal Pilipczuk, and Magnus Wahlstr{\"{o}}m.
\newblock Fixed-parameter tractability of multicut in directed acyclic graphs.
\newblock {\em {SIAM} J. Discret. Math.}, 29(1):122--144, 2015.

\bibitem{LokshtanovR0Z20doct}
Daniel Lokshtanov, M.~S. Ramanujan, Saket Saurabh, and Meirav Zehavi.
\newblock Parameterized complexity and approximability of directed odd cycle
  transversal.
\newblock In {\em {SODA}}, pages 2181--2200. {SIAM}, 2020.

\bibitem{MarxR14SICOMP}
D{\'{a}}niel Marx and Igor Razgon.
\newblock Fixed-parameter tractability of multicut parameterized by the size of
  the cutset.
\newblock {\em {SIAM} J. Comput.}, 43(2):355--388, 2014.

\bibitem{Marx2006}
Dániel Marx.
\newblock Parameterized graph separation problems.
\newblock {\em Theoretical Computer Science}, 351(3):394--406, 2006.
\newblock Parameterized and Exact Computation.
\newblock URL:
  \url{https://www.sciencedirect.com/science/article/pii/S0304397505006328},
  \href {https://doi.org/https://doi.org/10.1016/j.tcs.2005.10.007}
  {\path{doi:https://doi.org/10.1016/j.tcs.2005.10.007}}.

\bibitem{marx2013fixedparameter}
Dániel Marx and Igor Razgon.
\newblock Fixed-parameter tractability of multicut parameterized by the size of
  the cutset, 2013.
\newblock \href {http://arxiv.org/abs/1010.3633} {\path{arXiv:1010.3633}}.

\bibitem{NaorSS95}
Moni Naor, Leonard~J. Schulman, and Aravind Srinivasan.
\newblock Splitters and near-optimal derandomization.
\newblock In {\em {FOCS}}, pages 182--191. {IEEE} Computer Society, 1995.

\bibitem{pilipczuk2018directed}
Marcin Pilipczuk and Magnus Wahlstr{\"{o}}m.
\newblock Directed multicut is {W}[1]-hard, even for four terminal pairs.
\newblock {\em {ACM} Trans. Comput. Theory}, 10(3):13:1--13:18, 2018.
\newblock \href {https://doi.org/10.1145/3201775} {\path{doi:10.1145/3201775}}.

\bibitem{RamanujanS17skew}
M.~S. Ramanujan and Saket Saurabh.
\newblock Linear-time parameterized algorithms via skew-symmetric multicuts.
\newblock {\em {ACM} Trans. Algorithms}, 13(4):46:1--46:25, 2017.

\bibitem{ReedSV04}
Bruce~A. Reed, Kaleigh Smith, and Adrian Vetta.
\newblock Finding odd cycle transversals.
\newblock {\em Oper. Res. Lett.}, 32(4):299--301, 2004.

\end{thebibliography}

{\Huge Appendix}
\appendix

\begin{proof}[Proof of Proposition~\ref{prop:iterative_compression}]
Let $\mathcal{A}'(D,T,k)$ be an algorithm solving the problem
$\mathcal{L}'$ in time $t(k,|Y|)n^c$.
We now solve the original problem $\mathcal{L}$ as follows.
Consider an arbitrary ordering $v_1, \dots,v_n$ of $V(D)$.
We will compute iteratively a set $X_i \subseteq \{v_1, \dots v_i\}$
of size at most $\alpha k$ which is a solution of the partial instance
$I_i$ induces by $\{v_1, \dots v_i\}$.

We start with $X_0 = \emptyset$, which is a solution of $I_0$ by assumption.
Then, if $V_i$ is already computed, we apply $\mathcal{A}'$
to $(D[\{v_1,\dots,v_{i+1}\}], T,X_i\cup\{v_{i+1}\},k)$, which returns
by assumption a solution of size at most $\alpha k$, or says that
there is no solution of size at most $k$, and in this latter case we return ''no''
directly.
This call is valid because $X_i \cup\{v_{i+1}\}$ is a solution of
$(D[\{v_1,\dots,v_{i+1}\}], T,X_i\cup\{v_{i+1}\})$ of size at most $\alpha k +1$.

This algorithm consists in $n$ calls to $\mathcal{A'}$ with the solution
to compress of size at most $\alpha k +1$. Hence its running time
is at most $t(k, \alpha k +1)n^{c+1}$.
\end{proof}

\begin{lemma}\label{lem:logn_pow_k}
If $n \geq 2^{16}$ and $p \geq 0$, then $(\log n)^p \leq n + p^{2p}$.
\end{lemma}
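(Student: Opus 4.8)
The plan is to split on the size of $p$ relative to $\sqrt{\log n}$, so that in one regime the term $p^{2p}$ dominates $(\log n)^p$ and in the other regime $n$ does. Throughout I would write $L := \log n$ (reading $\log$ as $\log_2$), so that the hypothesis $n \ge 2^{16}$ becomes $L \ge 16$ and the claim reads $L^p \le n + p^{2p} = 2^L + p^{2p}$.

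First I would dispose of the case $p \ge \sqrt{L}$. Then $p^2 \ge L \ge 1$, so raising both sides to the $p$-th power gives $(\log n)^p = L^p \le (p^2)^p = p^{2p} \le n + p^{2p}$, and we are done. (This branch also harmlessly covers any convention for $p^{2p}$ at $p=0$, although $p=0$ will in fact land in the other branch.) Second, suppose $0 \le p < \sqrt{L}$. Since $L \ge 16 > 1$, monotonicity of $x \mapsto L^x$ gives $L^p \le L^{\sqrt L}$, so it suffices to show $L^{\sqrt L} \le 2^L$, which upon taking $\log_2$ is the elementary inequality $\log_2 L \le \sqrt L$ for $L \ge 16$. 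I would prove this last inequality by noting that at $L = 16$ both sides equal $4$, and that $L \mapsto \sqrt L - \log_2 L$ has derivative $\tfrac{1}{2\sqrt L} - \tfrac{1}{L \ln 2}$, which is positive exactly when $\sqrt L > 2/\ln 2 \approx 2.89$, hence positive on $[16,\infty)$; so the function is nondecreasing there and stays $\ge 0$. Thus in this case $(\log n)^p = L^p \le 2^L = n \le n + p^{2p}$.

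Combining the two cases proves the lemma. There is essentially no obstacle here: the only mildly non-routine ingredient is the estimate $\log_2 L \le \sqrt L$ for $L \ge 16$, and that threshold $16$ (hence the $2^{16}$ in the hypothesis) is precisely what makes the comparison go through. If one prefers to state the lemma without committing to a logarithm base, the identical argument works with $L = \ln n$ using the cleaner bound $\ln L \le \sqrt L$ (valid for all $L \ge 1$, since $\ln L / \sqrt L$ is maximized at $L = e^2$ with value $2/e < 1$), together with the observation that $n \ge 2^{16}$ still forces $L = \ln n \ge 16\ln 2 > 11 \ge 1$.
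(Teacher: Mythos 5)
Your proof is correct and follows essentially the same route as the paper: the same case split at $p=\sqrt{\log n}$, with $p^{2p}$ dominating in one branch and $n$ in the other via the same key inequality $\log\log n\leq\sqrt{\log n}$ (the paper writes it as $\sqrt{\log n}\leq\frac{\log n}{\log\log n}$ and verifies it by substituting $N=\sqrt{\log n}$, whereas you use a derivative argument, but this is a cosmetic difference). Nothing to fix.
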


\begin{proof}
If $p \geq \sqrt{\log n}$ then $n \leq 2^{p^2}$ and $(\log n)^p \leq p^{2p}$.

Otherwise, $p<\sqrt{\log n}$.
First, we show the following property:
\begin{equation*}
    n \geq 2^{16} \Rightarrow \sqrt{\log n} \leq \frac{\log n}{\log \log n}
\end{equation*}
To prove that, note that this property is equivalent to $2\log N \leq N$
with $N = \sqrt{\log n}$.
Then $N \geq 4$ is a sufficient condition, and $n \geq 2^{16}$ too.
Now we apply this result and we get 
$p \leq \sqrt{\log n} \leq \frac{\log n}{\log \log n}$.
It follows that $(\log n)^p \leq n$.
\end{proof}

\end{document}